\theoremstyle{plain}
\newtheorem{theorem}{Theorem}
\newtheorem{lemma}{Lemma}
\theoremstyle{remark}
\newtheorem{remark}{Remark}
\theoremstyle{definition}
\newtheorem{definition}{Definition}
\newtheorem{assumption}{Assumption}
\newtheorem{example}{Example}
\newcommand{\fullv}[1]{%
\iftoggle{arxiv}{%
#1
}
% else
{}
}
\newcommand{\confv}[1]{%
\iftoggle{arxiv}{}
% else
{#1}
}
\begin{document}
%
% paper title
% Titles are generally capitalized except for words such as a, an, and, as,
% at, but, by, for, in, nor, of, on, or, the, to and up, which are usually
% not capitalized unless they are the first or last word of the title.
% Linebreaks \\ can be used within to get better formatting as desired.
% Do not put math or special symbols in the title.
\title{A Scale-out Blockchain for Value Transfer with Spontaneous Sharding}

% author names and affiliations
% use a multiple column layout for up to three different
% affiliations
% \author{\IEEEauthorblockN{Michael Shell}
% \IEEEauthorblockA{School of Electrical and\\Computer Engineering\\
% Georgia Institute of Technology\\
% Atlanta, Georgia 30332--0250\\
% Email: http://www.michaelshell.org/contact.html}
% \and
% \IEEEauthorblockN{Homer Simpson}
% \IEEEauthorblockA{Twentieth Century Fox\\
% Springfield, USA\\
% Email: homer@thesimpsons.com}
% \and
% \IEEEauthorblockN{James Kirk\\ and Montgomery Scott}
% \IEEEauthorblockA{Starfleet Academy\\
% San Francisco, California 96678--2391\\
% Telephone: (800) 555--1212\\
% Fax: (888) 555--1212}}

% conference papers do not typically use \thanks and this command
% is locked out in conference mode. If really needed, such as for
% the acknowledgment of grants, issue a \IEEEoverridecommandlockouts
% after \documentclass

% for over three affiliations, or if they all won't fit within the width
% of the page, use this alternative format:
% 
\author{\IEEEauthorblockN{Zhijie Ren\IEEEauthorrefmark{1},
Kelong Cong\IEEEauthorrefmark{2},
Taico V. Aerts\IEEEauthorrefmark{1}, 
Bart. A. P. de Jonge\IEEEauthorrefmark{1},
Alejandro F. Morais\IEEEauthorrefmark{1} and
Zekeriya Erkin\IEEEauthorrefmark{1}}
\IEEEauthorblockA{\IEEEauthorrefmark{1}Faculty of Electrical Engineering, Mathematics, and Computer Science\\
Delft University of Technology,
Van Mourik Broekmanweg 6, Delft, the Netherlands, 2628XE\\ Email: z.ren@tudelft.nl}
\IEEEauthorblockA{\IEEEauthorrefmark{2}Ecole Polytechnique F\'ed\'erale de Lausanne (EPFL), CH-1015 Lausanne, Switzerland\\
Email: kelong.cong@epfl.ch}
}
%\author{Anonymous Authors}

% use for special paper notices
%\IEEEspecialpapernotice{(Invited Paper)}

% make the title area
\maketitle

% As a general rule, do not put math, special symbols or citations
% in the abstract
\begin{abstract}
%Bitcoin, known as the origin of the blockchain technology, proposed a decentralized value-transfer ledger in an untrusted environment. Essentially, it solves the double spending problem by a proof-of-work based scheme and a temper-proof chained block data structure to reach consensus and record all transactions, respectively. Recently, many blockchains have been proposed to achieve scale-out throughput, i.e., the throughput of the system grows linearly with the number of nodes. In general, the scale-out throughput is only feasible when a fraction of the full transaction set, instead of all transactions, are propagated to all nodes, which usually leads to degradation in reliability and security.

Bitcoin, as well as many of its successors, require
the whole transaction record to be reliably acquired by all
nodes to prevent double-spending.
Recently, many blockchains have been proposed
to achieve scale-out throughput by letting nodes only acquire
a fraction of the whole transaction set. However, these schemes, e.g.,
sharding and off-chain techniques, suffer from a degradation in
decentralization or the capacity of fault tolerance.

In this paper, we show that the complete set of transactions is not a
necessity for the prevention of double-spending if the properties of value transfers is fully explored.
In other words, we show that
a value-transfer ledger like Bitcoin has the potential to scale-out by its nature without sacrificing security or decentralization.
Firstly, we give a formal definition for the value-transfer
ledger and its distinct features from a generic database. Then,
we introduce an off-chain based scheme with a shared main
chain for consensus and an individual chain for each node for recording transactions.
A locally executable validation scheme is proposed
with uncompromising validity and consistency.
A beneficial consequence of our design is that nodes will spontaneously try to reduce
their transmission cost by only providing the transactions needed to show that their transactions are double-spending-proof.
As a result, the network is sharded as each node only acquires part of the transaction record and a
scale-out throughput could be achieved, which we call ``spontaneous
sharding''.

\end{abstract}

% no keywords

% For peer review papers, you can put extra information on the cover
% page as needed:
% \ifCLASSOPTIONpeerreview
% \begin{center} \bfseries EDICS Category: 3-BBND \end{center}
% \fi
%
% For peerreview papers, this IEEEtran command inserts a page break and
% creates the second title. It will be ignored for other modes.
\IEEEpeerreviewmaketitle

\section{Introduction}

Blockchain technology, made popular by Bitcoin \cite{nakamoto}, can be described as an append-only database maintained by distributed nodes instead of central authorities. One of the most well-known applications of blockchain technology is cryptocurrency, in which the blockchain is in the form of a distributed ledger, i.e., the data is transactions which are records of value transfers, called transactions, between nodes. The most crucial part of a distributed ledger for value transfer is the prevention of double-spending, which is achieved by consensus algorithms that guarantee all honest nodes in the network keep a consistent ledger of all valid transactions. The consensus algorithm can be divided into two categories, the Nakamoto-like consensus algorithms such as Proof-of-Work (POW) \cite{nakamoto} or Proof-of-Stake (POS) \cite{peercoin,algorand} and Byzantine fault tolerance (BFT) consensus algorithms such as PBFT \cite{castro}. For distributed ledger type of blockchain, most of the consensus algorithms effectively achieve the following conditions.
\begin{itemize}
\item {\bf Agreement (Consistency):} Two honest nodes should not have disagreement on the validity of a transaction.
\item {\bf Validity (Correctness):} Invalid transactions cannot be validated by honest nodes.
\item {\bf Termination (Liveness):} All transactions will be eventually known by all honest nodes.
\end{itemize}

Strictly speaking, the above conditions are not achievable in asynchronous networks \cite{flp,cap}. However, by slightly compromising either asynchronous \cite{castro} or deterministic conditions for termination \cite{benor,bracha}, the above-mentioned conditions can be achieved in practical asynchronous network.
Blockchains with both Nakamoto-like consensus \cite{garay} and BFT consensus can have scalable throughput, i.e., the communication cost per transaction (CCPT) is restricted to $O(N)$, where $N$ is the number of nodes in the network. Various consensus algorithms could achieve consensus with $O(N)$ complexity.
\begin{itemize}
\item {\bf Improved BFT algorithms:} Traditional BFT algorithms like \cite{castro,benor,bracha} have $O(N^2)$ CCPT. However, many recent BFT algorithms like \cite{cachin,miller,700bft,zyzzyva} achieve $O(N)$ CCPT by either packing up transactions or opportunistically running a much simpler scheme with traditional schemes as the back-up for the worst scenario.
\item {\bf Nakamoto-like consensus:} The POW scheme in Bitcoin introduced a game theoretical aspect to this problem. Then, instead of restricting the number of faulty nodes, an assumption is put on the rationality of nodes in the network. However, some early POW or POS based schemes have limitation in the transaction rate to meet the synchronous requirements \cite{croman}. With this problem solved in novel algorithms like \cite{algorand,ng,hybrid,ouroboros}, $O(N)$ CCPT is feasible in Nakamoto-like consensus.
\end{itemize}

%For Nakamoto-like consensus \cite{garay}, the POW based scheme used in Bitcoin introduced rational assumptions, incentives, and punishments on malicious behaviors to achieve scalable throughput. However, this scheme and some other POW or POS based schemes have limitation in the transaction rate to meet the synchronous requirements \cite{croman}.
%Separation of the block and transactions from the leader selection would remove the dependency of the duration of the consensus process on the transaction rate, which results in $O(N)$ CCPT \cite{algorand,ng,hybrid}.

%For instance, in POW, the agreement and validity are only achieved if the nodes actively synchronize their chains with the other nodes, for which they are economically encouraged to do so. It is shown that the agreement and validity will fail if the latency is too large compare to the duration of the consensus round . Separation of transactions from the leader selection

%On the other hand, BFT algorithms rely heavily on the communication between nodes and have a message complexity of $O(N^2)$ for traditional algorithms like \cite{castro,benor,bracha}. This results in a severe performance degradation as the population of the network grows to above two digits \cite{croman}. Recent scalable BFT algorithms such as \cite{cachin,miller,700bft,redbelly} reduce the CCPT to $O(N)$ either by packing up transactions or opportunistically running a much simpler scheme with traditional schemes as the back-up for the worst scenario.

\subsection{Scalability of Blockchain}

The most crucial problem in a decentralized value-transfer system is double-spending, which could be prevented when all nodes have a consistent record of all transactions. Then, $O(N)$ CCPT is required for all transactions. Blockchains with $O(N)$ CCPT are commonly referred as ``scalable'' blockchains since their throughput will not decrease (or increase) with the number of nodes and the computation and communication capacities in the network.

%Many solutions have been given by the researchers and developers making various trade-offs between these properties. Here we introduce two of them which are closely related to our solution, off-chain solutions and sharding solutions.
\subsubsection{Scale-out Blockchain Solutions}

Recently, several solutions have been proposed to achieve $o(N)$ CCPT, sometimes referred as ``scale-out'' throughput as the throughput will increase as $N$ grows, by reducing the number of validators and recordkeepers for each transaction. 
In other words, the termination property is compromised, i.e., a transaction is not necessarily known to or validated by the whole network, but a part of it.
Here, we introduce three types of such schemes.
\begin{itemize}
\item {\bf Off-chain Solutions:}
%Amongst all efforts that have been taken to improve the scalability of blockchain, we focus on the off-chain techniques taken by \cite{lightning,rsk,raeden}. The idea of all off-chain techniques are straightforward. Instead of the actual data, a hash of the data are put on the chain and the actual data are then stored ``off-chain''. The data will then still merits the tamper-proof property brought by the blockchain. However, the Validity and Termination properties of the data are not longer guaranteed. In other words, the off-chain techniques separate the Agreement from Validity and Termination.
%However, such techniques are not complete without schemes for the Validity and Termination properties. 
This type of approach are mostly associated with some existing blockchain systems as the main chain. Each node holds their transactions locally, sometimes referred as ``off-chain'', and only sends a description or the eventual outcome of these transactions to the ``main chain'', referred as ``on-chain''. Since there is no guarantee on the validity of the ``off-chain'' transactions, either validation nodes are introduced to validate and endorse these transactions \cite{rsk,polkadots} or economical deposit should be provided for the transactions \cite{lightning,plasma}. In both cases, the validity condition is compromised due to centralization or the economical constraint. %The Termination condition is not guaranteed since it is intended to achieve higher throughput that some transactions will not be known by all honest node.

\item {\bf Directed Acyclic Graph (DAG) Solutions:}
In another type of approach, we call DAG solutions, the transactions are not structured in a chain, but in a DAG \cite{tangle,byteball,swirld}. The validity is dependent on the (directly or indirectly) outgoing edges of the transaction, which represents the nodes that have validated it. A scale-out throughput can be achieved if the acquirement of the complete graph is not obligated for all nodes \footnote{In fact, in \cite{tangle,byteball,swirld}, the complete graph is required to prevent double-spending for their applications. Hence, they are not scale-out schemes, although a DAG scheme designed similarly could scale-out for some other applications.}. Then, the validity of the transactions is compromised due to its dependency on the validators.

\item {\bf Sharding Solutions:}
Recently, sharding solutions, which artificially divide the network, have been widely studied and discussed \cite{omniledger,shard,ethshard,chainspace}. They include schemes that fairly and randomly divide the network into small shards with vanishing probability of any shard having an overwhelming number of adversaries. Hence, the BFT consensus algorithm is run only within the shards and the CCPT is then $O(g^2)$ ($O(g)$ if scalable BFT algorithms are used) where $g$ is the size of the shard. However, the validity condition is also compromised in the sense that the sharding is only feasible when the ratio of adversaries in the network is small. Moreover, according to our knowledge none of the existing sharding schemes proved $g=o(N)$, which is the condition for scale-out throughput.
\end{itemize}

\subsubsection{Problem Statement}
It seems to be infeasible to achieve scale-out performance with the same level of security or decentralization as Bitcoin or blockchains using classical BFT algorithms. This does not come as a surprise since intuitively, double-spending can only be prevented with global consensus. This problem severely hampers the mainstream adoption of blockchain system since the security and trustworthiness of the blockchain system grows with the size of the network and the decentralization level. As a result, a trilemma is formed among throughput, security, and decentralization as also stated in \cite{omniledger,ethshard}.
However, at the meantime, traditional BFT algorithms could reach consensus on any type of message, which is redundant for many blockchain systems since messages in Bitcoin and cryptocurrencies are ``transactions'' which represent value transfers.

This leads to the research questions considered by this paper:
\begin{itemize}
\item What is the key functionality/features of the value-transfer blockchains?
\item Can we use these features to design a scale-out blockchain system to achieve the functionalities of value transfer without sacrificing reliability or decentralization?
%\item If not, then is it possible that the reliability or decentralization is sacrificed in such a way that it is not reliable for a certain functionalities, but sufficient for some others, e.g., value transfers? In other words,
%\item Is it possible that by using the redundancy that current blockchain system has for other functionalities, a scale-out blockchain can be designed for value-transfer without sacrificing reliability and decentralization?
\end{itemize}

\subsection{Overview of Our Solution}

Our solution gives an answer to the above questions. By exploring the features of value transfers which have not yet been used by other blockchain systems, we propose a blockchain system with a very simple structure to achieve scale-out throughput.

\subsubsection{Value-Transfer Ledgers}
Most of the aforementioned blockchain systems are decentralized solutions for value transfers and focus on reaching BFT consensus on transactions to prevent double-spending. However, traditional BFT consensus algorithms are generic and achieve BFT consensus regardless of the message type. In Bitcoin and other blockchain systems using Nakamoto-like consensus \cite{garay}, some realistic interpretation of transactions is used and the notion of rational behavior is introduced: rational issuers of transactions are interested in proving the validity of their transactions and keeping synchronized with other nodes. As a result, they either take effort themselves by mining or hire other nodes by paying transaction fee to submit their transactions to a blockchain which reaches BFT consensus.
In this paper, we take one step further to formally define the features for value transfers in the Value-Transfer Ledgers (VTL) model, i.e., 
\begin{itemize}
\item rational senders of the transactions should take effort to prove the authenticity of the transaction to the receiver;
\item rational receivers should check the authenticity of a transaction while receiving it;
\item a rational receiver will not care about the authenticity of other transactions unless they have an impact on their received transactions.
\end{itemize}
With these features, we propose a system that minimizes the redundancy of reaching BFT consensus on the transactions as if they are generic data and allows secure and reliable value transfers in a full decentralized fashion.
\subsubsection{Our System}
Our system has an off-chain structure, which contains individual chains for nodes to record their own transactions and a main chain for the consensus of the abstracts of their chains, i.e., provides a shared global state. Further, a locally executable validation function is proposed to have correct and consistent validation results upon all transactions. Besides a validation function, the crucial part of a valid validation scheme is that all honest nodes should also have a consistent observation of the transactions. In our system, we employ the aforementioned features of value transfers to achieved an alternation: instead of letting all nodes have consistent observation on all transactions, we guarantee that all nodes {\em that want to know the validity of a transactions} will have consistent observation on all transactions that {\em have impact on the validity of that transaction}. We also prove that this alternation is enough to have a valid system for value transfers.

\subsubsection{Spontaneous Sharding}

Moreover, the most innovative result in this paper is {\em spontaneous sharding}, which is a natural and direct consequence of using our system for value transfers. Generally speaking, in value-transfer systems, the values are passed from one node to another. In our system, for each piece of value, a proof is associated with it and the size of the proof grows with the number of nodes that it has been passed to. Then, since the sender could choose the source of his fund for the transactions, e.g., in Bitcoin, a node could choose from several of his unspent transaction outputs, rational nodes will choose the pieces of value with the minimum size of proof for the sake of the transmission cost. As a result, nodes will tend to cycle the value in small shards rather than the whole network. In other words, the network is sharded by the nature of the system without sacrificing either security or decentralization.

\subsection{Main Contributions}

The main contributions of this paper are the following.
\begin{itemize}
\item We formally define the VTL and distinguish it from other types of ledgers and databases.\footnote{A slightly similar idea have been raised in \cite{treechains} without a formal defined model or details for feasible schemes.}
%First of all, a valid value-transfer transaction has some proof which could be used to prove its validity. Then, the receiver of an unspent transaction is the current owner of the transferred value. Hence, he is motivated to show the proof of the validity of this transaction to other nodes. 
\item We propose an off-chain based blockchain system that prevents double-spending without sacrificing either security or decentralization. In particular, our consensus algorithm achieves uncompromised agreement and validity conditions of the BFT consensus in VTL model.
\item We prove that our system is a valid VTL system. In other words, although our system do not guarantee BFT for generic types of data, we guarantee that if all nodes have interest in their values in the system and behave rationally, the valid transactions in our system are double-spending-proof.
\item It is shown that the CCPT of our system is upper bounded by $O(N)$, which suggests scalable throughput. Moreover, we show that our system could achieve scale-out throughput via spontaneous sharding in several scenarios.
%\item Our system has a certain similarities to all three approaches mentioned above. It can be seen as an off-chain technique in which the validators are the interested parties instead pre-defined \cite{rsk} or selected \cite{polkadots}. Meanwhile, it can be seen as a sharding scheme where the shard of a transaction is determined by the interested parties of that transaction instead randomly created \cite{omniledger,ethshard,elastico} or pre-defined \cite{eos}. In VTL, our system can be seen as an upgrade to all existing schemes since it provides scale-out throughput with uncompromised agreement and validity property.
\end{itemize}

% introduce the model of Value Transfer Ledger (VTL) which brings in some realistic interpretation of the ledger. We make use of the properties that distinct value-transfer transactions from generic data, e.g., the objects or value transfered by the transaction is considered as a valuable property to all nodes in the networks, i.e., money. Hence, we assume that the current owner of that value is motivated to prove the validity of the transaction to other nodes, follows from the fact the the owner of the money needs to spend it.
% For this type of ledger, we propose a blockchain system with an structure similar to off-chain solutions where each node holds its individual blockchain and periodically sends an abstract of its block to a commonly agreed main chain for confirmation. We propose a validation scheme which is proved to achieve uncompromised Agreement and Validity conditions. Then, we show that in most scenarios, our system can exploit a smart transacting algorithm to spontaneously shard and achieves out-scaling performance.

\subsection{Content of This Paper}

This paper is organized as follows. In Section~\ref{s:pre}, we formally introduce the VTL model and assumptions. In Section~\ref{s:sys}, we introduce our system and prove the correctness of this system in VTL model. We analyze the performance of our scheme and introduce the concept of spontaneous sharding which results in scale-out throughput in Section~\ref{s:ana}. In Section~\ref{s:con}, we conclude our paper with possible topics for further exploration.

% In our system, every node has their own chains for their transactions, and, to emphasize on our own contribution, we assume that there exists a blockchain which satisfies the Agreement, Validity, and Termination conditions. Then, we introduce a very practical model, namely Interest Model (IM), for distributed ledger. We will show that our system achieves unbounded throughput in IM (less than $O(N)$ communication cost for each transaction) with Agreement and Validity conditions hold and a slightly compromised Termination condition.

\section{Model}\label{s:pre}

%\section{Model and System Description}\label{s:pre}

%Our system has an off-chain structure, in which the hashes of batches of transactions, rather than individual transactions, are recorded on the main chain. %Since it is clear that moving transactions off-chain and keeping a hash of them on-chain will not violate the Agreement property, we focus on the properties of Validity and Termination.
In this paper, we emphasize on our novelties and contributions by showing that our system provides the minimum functionalities for value transfers. These functionalities can be used as building blocks for more generic VTL systems. Hence, some other elements are simplified to the most comprehensive level, e.g.,
\begin{itemize}
\item We consider every node holding some initial value. The mining of new coins is not considered.
\item Transactions are defined similarly to Bitcoin, namely the Unspent Transaction Output as input (UTXO) structure. We assume a transaction has only one sender and one receiver.
\item We consider a weak asynchronous network with $f \leq \lfloor \frac{N-1}{3} \rfloor$ Byzantine adversaries, just as the one used in \cite{castro}, so that PBFT can be straightforwardly applied. Note that this assumption is solely made for easy comprehension of our system. The same framework proposed in our system can be plugged into any permissioned or permissionless blockchain or consensus algorithm that achieves global BFT consensus on all transactions.
%\item The concept of the hashes and the digital signatures are introduced as unbreakable primitives.
\item We assume that there exists an unbreakable hash function $Y=H(X)$ and a digital signature scheme $Y=Sig_{i}(X)$ based on the public-private key pairs where node $i$ is the signer.
\end{itemize}
%Note that this system is designed as a fundamental for more generic blockchain systems in VTL model. Features such as multiple senders and receivers, permissionless can be easily added to our system.
%In this section, we introduce the model, the design of our system, as well as the VTL. Furthermore, we give definitions for some terms that will be used throughout the paper.

% Essentially, our off-chain scheme can be used to any type of ledgers as long as it could fit in the IM and achieve the same level of Validity as if all transactions are recorded on-chain. For the sake of simplicity, we apply our scheme in the classical BFT model with weak asynchronous assumption, in which PBFT algorithm \cite{castro} is known as an existing solution.
% In this section, we briefly describe the on-chain part of our system as a primitive. Then, we describe our model and state the problem.

% \begin{itemize}
% \item {\bf Agreement(Consistency)}: If an honest node agrees the value of a vector $v=x$, then, if another honest node agrees a value $x'$ for $v$, then $x'=x$.

% \item {\bf Validity(Correctness)}: If $v=x$ is proposed by an honest node, then at least one honest node will agree that $v=x$.

% \item {\bf Termination(Liveness):} If a value is agreed by an honest nodes for a vector $v$, all honest nodes will eventually agree a value for $v$.
% \end{itemize}

\subsection{Network Model}

We consider a weak asynchronous network of $N$ nodes in which the message delay does not increase indefinitely as described in \cite{castro}. Each node holds some initial value that could be transacted with others. We assume that there are $f \leq \lfloor \frac{N-1}{3} \rfloor$ Byzantine adversaries and we have the following definitions for honest nodes and adversaries.
\begin{definition}[Honest nodes and Adversaries]
{\bf Honest nodes} will follow the schemes of the system.
{\bf Adversaries} can behave arbitrarily. 
\end{definition}

The network is assumed to be permissioned, i.e., the nodes are known to each other by their identities $n \in \{1, 2 ,\ldots,N\}$. We also assume that there exists a public key infrastructure (PKI) and nodes can link between the identity and the public key of each node.
Moreover, we introduce the ``chain'' as a data structure that consists of an ordered sequence of blocks. Each block consists of multiple transactions and a hash digest of the previous block, except for the first block, namely the genesis block.

\subsection{VTL Model}

Inspired by Bitcoin, most of the blockchain systems mimic value transfer systems, e.g., currency, in a decentralized fashion. The problem could be described as the following: Each node holds some positive value that they could transfer to others via transactions. A transaction is only valid if it is authorized by the owner of the value and the value cannot be double-spent. In other words, for any ``value'' in the network, it has three properties:
\begin{enumerate}
\item {\bf Ownership:} Value has an owner. Only the owner of the value can authorize to transfer his value.
\item {\bf Fluidity:} Any transfer can be completed in finite time.
\item {\bf Validity:} The value cannot be created or duplicated.
\end{enumerate}
A value transfer system can be in many forms, e.g., the account-based ledgers, which is widely used in banking system and many other accounting systems. Bitcoin, as well as many other blockchain systems, use a ledger with UTXO structure, which is very suitable for decentralized systems. Here, we introduce the UTXO structure.

\subsubsection{UTXO}

Firstly, in UTXO a transaction is an authorized piece of information that transfers the value from one node to another. In this paper, we use the following definition for a transaction, which is a slight variation of the traditional UTXO structure used in Bitcoin.
\begin{definition}[Transaction]
\sloppy A transaction $tx_i$ is a five-tuple:
$tx_i= \langle {\rm Source}_i, s_i, d_i, a_i, r_i \rangle$
where ${\rm Source}_i$ is the set of transactions which are used as the source, $s_i$ is the sender, $d_i$ is the receiver, $a_i$ is the transacted value, and $r_i$ is the remaining value.
\end{definition}

In Bitcoin, transactions are usually referred to the ones on the longest chain, which also suggest that they are {\em valid} transactions. However, in some other systems like \cite{ethereum,spectre}, the concept of a valid transaction is ambiguous since invalid transactions can also exist on the chain. As a result, a deterministic and consistent rule should be applied for all nodes to determine the valid transactions. Here, we define the valid transaction in UTXO as the following.

\begin{definition}[Validity of a Transaction]\label{def:valtx}
\sloppy A transaction $tx_i=\langle {\rm Source}_i, s_i, d_i, a_i, r_i \rangle$ is valid if and only if the following conditions hold.
\begin{itemize}
\item {\bf Confirmed and authorized:} $tx_i$, as well as some witness indicating that $tx_i$ is authorized by $s_i$, e.g., a digital signature of $tx_i$ signed by $s_i$, are on a tamper-proof ledger.
\item {\bf Valid sources:} All transactions in $tx_j \in {\rm Source}_i$ are valid.
\item {\bf Value equality:} The original value equals to the sum of the transacted value and the remaining value, i.e., $\sum_{tx_j \in {\rm Source}_i} r_j = a_i+r_i$.
\item {\bf No double-spending:} Assuming $tx_i=t_{u,k,l}$, for any $tx_j \in {\rm Source}_i$, there does not exist a valid transaction $tx_{i'}= t_{u,k',l'}, tx_j \in {\rm Source}_{i'}$ such that $k' < k$ or $k\ = k, l' < l$.
\end{itemize}
\end{definition}

Then, we define the unspent transaction in UTXO.
\begin{definition}[Unspent Transaction]
A transaction $tx_i$ is an unspent transaction if there is no other transaction $tx_j$ in the ledger which is valid and $tx_i \in {\rm Source}_j$.% We will explain the term ``valid'' in Subsection~\ref{ss:conf}.
\end{definition}

Clearly, in UTXO structure, the value exists in the form of unspent transactions. The value is always transferred from one unspent transaction to another unspent transaction, instead of transferring from one account to another account as the account-based ledger structure.
\subsubsection{Properties of VTL Model}

%It is commonly believed that a consistent ledger of valid transactions should be acquired by all nodes to build a valid VTL system. The most straightforward approach is to treat transactions as general data and reach BFT consensus on it. 
Traditional blockchain systems prevent double-spending by reaching the classical BFT consensus on all transactions. The most straightforward approach is to treat transactions as bit strings and use classical BFT algorithms \cite{castro,benor,bracha} or improved BFT algorithms \cite{cachin,miller,zyzzyva} to reach consensus. However, this approach misses the notion of ``value'' behind the transactions and discards the differences between a transaction and general data. These differences will be uncovered if the original notion of value is focused. Here, we pick up the idea behind the ``rational nodes'' and ``transaction'' notions in Nakamoto-like consensus and add more real-world interpretations to these two notions in value transfers.

%We introduce a special type of ledgers in which the transactions are records of value transfers. The concept itself is rather comprehensive and realistic. For instance, in cryptocurrencies, transactions are records of the transfers of the currency.
% In some other cases where the authenticity and the validity of the past incomes are also valued by the network. Then, the receivers of the transactions will always be rewarding by keep a record of it.
%The main difference of VTL from other types of databases is that in VTL, every transaction is a transfer of some positive value, which suggests the following. 
Firstly, value has an owner and the receiver of an unspent transaction is the owner of that value until it is spent again. The owner would take full initiative and responsibility of proving the existence and the authenticity of the value to any node upon request. If he fails to do so, it will be considered as against his own interest. For instance, if the value is considered as money, the holder of the money is motivated to prove the money is real when he uses it for purchase. A failure in proving will cause the purchase to fail, which is against his own interest.

Secondly, the concern of the nodes is the authenticity of the value they owned instead of the transaction records. Hence, nodes will check the past transaction records only if the records have impact on the authenticity of the value that they concern. Otherwise, nodes have no interest and will not care about the validity of a past transaction.

As a result, we make three assumptions in VTL model. Throughout this paper, we use the term ``node $u$ is curious about transaction $tx_i$'' to represent that node would like to check the validity of transaction $tx_i$.

% Here, we first define the term {\em curiosity} and then give formal descriptions of the properties of VTL. % and give two configurations of the RRL.
% \begin{definition}[Curiosity]
% A node $u$ is \emph{curious} about transaction $tx_i$ if it knows $tx_i$ and would like to check the validity of $tx_i$.
% \end{definition}
%For the first configuration, we consider a scenario that any income is valued by the system.
% \begin{property}[Interest (Configuration 1)]\label{pro:1}
% A node $u$ has \emph{interest} in a transaction $tx_i$ if and only if $u$ is the receiver in $tx_i$.
% \end{property}
% \begin{property}[Informative to Transactions with Interest (Configuration 1)] \label{pro:info}
% Giving a validation function $\mathsf{V}(tx_i, {\cal P}(tx_i))$, if a node $u$ has interest in transaction $tx_i$ and does not know the validity of it, it will be curious about it.

% If the node has interest and knows the validity of a transaction, it will provide ${\cal P}(tx_i)$ to any node once it is required.
% \end{property}

% For the second configuration, we consider a currency type of system, where the transactions are the flows of the money. Clearly, only the current owner of the money have interest in those money. More specifically, in an UTXO model, we have the following definition.
%Here are the three properties for VTL.
\begin{assumption}[History Disinterest]\label{pro:hd}
A node $u$ is curious about a spent transaction $tx_i$ only when it is curious about an unspent transaction $tx_j$ and the validity of $tx_i$ is required to check the validity of $tx_j$.
\end{assumption}
\begin{assumption}[Rational Receiving]\label{pro:rr}
A node $u$ is curious about transaction $tx_i$ if it is the receiver of $tx_i$ and does not know the validity of it.
\end{assumption}
\begin{assumption}[Rational Owner]\label{pro:ro}
If node $u$ is the receiver of a valid and unspent transaction $tx_i$, it will provide the validity proof of $tx_i$ to any node once it is requested.
\end{assumption}

In practice, it is not rational for a node to validate an unspent transaction if it is not the receiver since validation is resource consuming. Hence, we have an alternative version for Assumption~\ref{pro:rr} to minimize the cost in a resource-limited network.

\begin{assumption}[Rational and Cost-saving Receiving]\label{pro:rcsr}
A node $u$ is curious about transaction $tx_i$ {\bf if and only if} it is the receiver of $tx_i$ and does not know the validity of it.
\end{assumption}

This assumption will not affect the correctness of our scheme. It will be applied in the performance analysis for simplicity.

\subsubsection{Valid VTL System}

%Moreover, we define a valid VTL system. Basically, such a system should be able to provide three properties: 1, a transaction between two honest nodes should be feasible in finite time; 2, the receiver of a transaction should be able to spend the money by his own without the dependency on other nodes; 3, invalid transactions should never be accepted by any honest node. We give the formal definition of a valid VTL system as the following.
Then, we define a valid VTL system with a structure of UTXO.
\begin{definition}[Valid VTL System]\label{def:vvtls}
A system with UTXO structure is called a valid VTL system if it satisfies the following conditions under Assumption~\ref{pro:hd}-\ref{pro:ro}:
\begin{enumerate}
\item {\bf Ownership:} If an honest node receives a valid transaction, then he can make one valid transaction using it as a source. Meanwhile, no other node can make a valid transaction using it as a source.
\item {\bf Fluidity:} A valid transaction will be considered as valid by the receiver in finite time if both the sender and the receiver are honest.
\item {\bf Validity:} Invalid transactions will not be considered as valid by honest nodes.
\end{enumerate}
\end{definition}

\begin{remark}[Relationship Between BFT Consensus and Valid VTL System]\label{rm:bftvtl}
Clearly, the Validity condition of valid VTL systems is exactly the Validity condition in the BFT consensus. Then, the Fluidity condition is guaranteed by all three BFT consensus conditions and the Ownership condition is guaranteed by the Agreement condition and the way that UTXO structure is designed. Hence, BFT consensus on the transactions is a sufficient condition for a valid VTL system. However, later we will show that it is not a necessary condition for VTL since the BFT consensus with a weakened Termination condition is also sufficient for a valid VTL system.
\end{remark}

\section{Our System}\label{s:sys}

Our system consists of three parts: individual chains for transactions, a main chain for a global shared state, and a validation scheme for validation of the transactions.
% \begin{itemize}
% \item {\bf Individual Chains:} Each node has its own chain which consists only the transactions sent by it. The transactions form blocks with arbitrary size and a newly generated block is chained to the previous block with a hash of the previous block.
% \item {\bf Main Chain:} Nodes send {\em abstracts} of blocks to reach consensus using PBFT in a round-based fashion. Essentially, an abstract is the hash of a block plus a digital signature of that block signed by the owner of the block. The abstracts that have reached consensus in a round will form a block in the main chain.
% \item {\bf Validation Scheme:} A validation scheme is used to determine the validity of transactions. The first part of the scheme is Proof Collection, which allows nodes to reliably and efficiently collect the information that they need to validate a transaction via merely a point-to-point communication with one node. The second part is a deterministic validation function, which allows nodes to correctly determine the validity of transactions.
% \end{itemize}
In this section, we first introduce these three parts of our system and give important theorems of the system. Then, we prove that our system is a valid VTL system as well as prove that our system actually only compromises the Termination condition of the BFT consensus.

%\subsection{Overview}
% \subsubsection{Transactions}
% We use a similar transaction structure as the unspent-transaction-output as input structure described in \cite{nakamoto} and give the following definitions for transaction and unspent transaction.
% \begin{definition}[Transaction]
% \sloppy A transaction $tx_i$ is a five-tuple:
% $tx_i= \langle {\rm Source}_i, s_i, d_i, a_i, r_i \rangle$
% where ${\rm Source}_i$ is the set of transactions which are used as the source, $s_i$ is the sender, $d_i$ is the receiver, $a_i$ is the transacted value, and $r_i$ is the remaining value.
% \end{definition}

% \begin{definition}[Unspent Transaction]
% A transaction $tx_i$ is an unspent transaction if there is no transaction $tx_j$ which is valid and $tx_i \in {\rm Source}_j$. We will explain the term ``valid'' in Subsection~\ref{ss:conf}.
% \end{definition}

\subsection{Individual Chains}\label{ss:ic}
\sloppy Each node generates an individual chain to record their own transactions in a first-in-first-out fashion. An individual chain of node $u$ is an ordered set of blocks $\{B_{u,1},B_{u,2}, \ldots, \}$ and a block is an ordered set $B_{u,k}=\{H(B_{u,k-1}), t_{u,k,1},t_{u,k,2}, \ldots\}$, where $t_{u,k,l}$ is a transaction sent by node $u$ with valid sources, value equality, and no double-spending as defined in Definition~\ref{def:valtx}. In our system, we assume that there is an initial value assigned to each node in the same fashion as a transaction with no source. The sender and receiver of this transaction are both the node itself.

The size of a block can be arbitrary. Periodically, nodes send \emph{Abstracts} to the \emph{main chain} (will be introduced in the next paragraph). The abstract is defined as the following.
\begin{definition}[Abstract]
\sloppy An abstract of block $B_{u,k}$, denoted by $A_{u,k}$, is a four-tuple: 
$A_{u,k} = \langle u, k, H(B_{u,k}), Sig_u(u||k||H(B_{u,k})) \rangle$.
\end{definition}

\subsection{Main Chain}
The main chain uses PBFT as its consensus algorithm and the blocks consist of \emph{Abstracts} signed by the corresponding nodes. We assume that the abstracts of all genesis blocks are on the main chain. Since it has been proved that the PBFT can reach BFT consensus on messages in our network model \cite{castro}, we simply see the main chain as a reliable and secure primitive and all abstracts included on the main chain reaching the BFT consensus. Honest nodes will send abstracts of their newest blocks to the main chain when they observe that their previous abstracts are on-chain.

\subsection{Confirmation}\label{ss:conf}

The transactions on individual chains are arbitrary in the sense that they are neither tamper-proof nor signed. The transactions will be tamper-proof and signed if an abstract of a block that comes after it is contained in the main chain, which we call confirmed transactions. Here, we give the formal definitions of a confirmed transaction and a confirmed block.
\begin{definition}[Confirmation]
\sloppy A block $B_{u,k}$ is confirmed if 
\begin{itemize}
\item an abstract of the block or a block after it, i.e., $A_{u,k'}, k'\geq k$, is on the main chain;
\item for all abstracts of node $u$, denoted by $A_{u,l}$, that are on the main chain and $l \leq k'$, $A_{u,l}$ is compliant to their corresponding blocks.
\end{itemize}
A transaction $tx_i=\langle {\rm Source}_i, s_i, d_i, a_i, r_i \rangle$ is a confirmed transaction if $tx_i \in B_{u,k}, s_i = u$, and $B_{u,k}$ is confirmed. We call $B_{u,k}$ and $tx_i$ are confirmed by $A_{u,k'}$.
\end{definition}

The confirmation of a transaction suggests that it is tamper-proof as if it is on-chain, which is shown in the following theorem.
% \begin{property}[Confirmed Transactions]\label{pro:ct}
% If a transactions $tx_i$ is not a confirmed transaction of node $u$, then there does not exist a transaction $tx_j = tx_i$ that satisfies $tx_j \in B_{u,k}$ and $B_{u,k}$ is confirmed.
% \end{property}
\begin{theorem}[Confirmed Transactions]\label{pro:ct}
If $tx_i=t_{u,k,l}$ is a transaction confirmed by abstract $A_{u,k'}, k' \geq k$, then there does not exist a chain of confirmed blocks $\{B'_{u,1},B'_{u,2}, \ldots, B'_{u,k'}\}$ such that $t'_{u,k,l} \ne tx_i$ and all hashes are correct. 
\end{theorem}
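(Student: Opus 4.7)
The plan is to argue by contradiction: suppose there exists an alternative chain $\{B'_{u,1}, B'_{u,2}, \ldots, B'_{u,k'}\}$ of confirmed blocks with all hash pointers correct and with $t'_{u,k,l} \neq tx_i$ at position $(u,k,l)$. I will derive a contradiction by propagating hash inequality from position $k$ forward to position $k'$, and then showing this collides with what the main chain dictates.

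First, I would observe that the block $B_{u,k}$ (the block on node $u$'s real chain that contains $tx_i$) and the alternative block $B'_{u,k}$ differ at position $l$, since $t'_{u,k,l} \neq tx_i = t_{u,k,l}$. Invoking collision resistance of $H$, this gives $H(B_{u,k}) \neq H(B'_{u,k})$. Next, I would carry out a short induction on $j$ from $k+1$ up to $k'$: since each block $B_{u,j}$ (resp.\ $B'_{u,j}$) contains $H(B_{u,j-1})$ (resp.\ $H(B'_{u,j-1})$) in its header, an inequality of hashes at position $j-1$ forces $B_{u,j} \neq B'_{u,j}$ (they differ in the prev-hash field), and then collision resistance again yields $H(B_{u,j}) \neq H(B'_{u,j})$. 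In particular at $j=k'$ I obtain $H(B_{u,k'}) \neq H(B'_{u,k'})$.

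Now I would use the definition of confirmation to finish. Since $tx_i$ is confirmed by $A_{u,k'}$, the abstract $A_{u,k'} = \langle u, k', H(B_{u,k'}), Sig_u(\cdot)\rangle$ is on the main chain. Because the main chain achieves BFT consensus (Agreement), this is the unique abstract at slot $(u,k')$. For $B'_{u,k'}$ itself to be confirmed, by definition some abstract $A_{u,k''}$ with $k'' \geq k'$ must lie on the main chain and, for every $l \leq k''$, the abstract $A_{u,l}$ on the main chain must be compliant with its corresponding block in the alternative chain; in particular $A_{u,k'}$ must be compliant with $B'_{u,k'}$, meaning its hash field equals $H(B'_{u,k'})$. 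But we already established $H(B_{u,k'}) \neq H(B'_{u,k'})$, contradicting uniqueness of $A_{u,k'}$ on the main chain.

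The main obstacle I anticipate is bookkeeping rather than mathematical depth: getting the inductive step cleanly stated (in particular, arguing that a difference at slot $(k,l)$ inside $B_{u,k}$ really does imply $H(B_{u,k}) \neq H(B'_{u,k})$ without having to open up the internal block format), and making precise the interplay between the per-block confirmation predicate and the uniqueness of main-chain abstracts guaranteed by PBFT. Everything else reduces to collision resistance of $H$ together with the Agreement property of the main chain, both of which are assumed in Section~\ref{s:pre}.
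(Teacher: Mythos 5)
Your proof is correct and follows essentially the same route as the paper's: a contradiction built from collision resistance of $H$ propagated along the hash chain between positions $k$ and $k'$, combined with the fact that the single main-chain abstract $A_{u,k'}$ must be compliant with both candidate blocks at height $k'$. The only cosmetic difference is direction — you push the hash inequality forward from $k$ to $k'$, whereas the paper first forces $B'_{u,k'}=B_{u,k'}$ from the shared abstract and walks the equality backward to $B_{u,k}$ — which does not change the substance of the argument.
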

\confv{The formal proof of this theorem can be found in the extended version in \cite{ren}.}\fullv{The formal proof of this theorem is given in Appendix~\ref{a:1}.} By Theorem~\ref{pro:ct}, when a transaction is confirmed, the position and content of it cannot be changed. Furthermore, it is also signed since the sender of the transaction is the same as the signer of the abstract and the abstract contains an unforgeable signature of the sender. Note that a confirmed transaction here is not the same as confirmed transaction in other blockchain systems like Bitcoin, as they are not yet validated.

\subsection{Validation Scheme}\label{s:val}

Our validation scheme consists of two parts: a proof collection process that allows any node that is curious about a transaction to reliably and efficiently collect the proof of it; a validation function that deterministically decide whether a transaction is valid or not depending on the collected proof.

%In this section, we will propose a valid VTL system. We first introduce our validation scheme by describing the validity proof and the validation function of a transaction. Then, we prove the correctness of this scheme and show the BFT consensus achieved by valid transactions in VTL model. At last, we show that this scheme together with our system structure described in Section~\ref{s:pre} form a valid VTL system.

\subsubsection{Proof Collection}

First we define the validity proof of a transaction $tx_i$.

\begin{definition}[Validity Proof]\label{def:vp}
Assuming that the sender $s_i=u$ for transaction $tx_i$, $tx_i \in B_{u,k}$, and there exists an abstract $A_{u,k'}, k' \geq k$ in the main chain, a validity proof ${\cal P}(tx_i)$ is the union of a set of all blocks before and including $B_{u,k'}$ and the proofs of all transactions in ${\rm Source}_i$, i.e., ${\cal P}(tx_i) = \{B_{u,k''}| k'' \leq k'\} \cup \{B_{v,l}| B_{v,l} \in {\cal P}(tx_j), tx_j \in {\rm Source}_i\}$.
\end{definition}

By Definition~\ref{def:vp}, a validity proof of $tx_i \in B_{u,k}$ includes the chain of $u$ from the genesis block to a block $B_{u,k'}, k' \geq k$ which has an abstract in the main chain. Moreover, it also includes the chains of the sources of this transaction, and recursively the sources of the sources until the genesis block.

In the following lemma, we show that the proofs of valid transactions can always be collected by nodes who are curious about them in the VTL model.
\begin{lemma}[Feasibility of the Proof Collection]\label{pro:fpc}
If a node $u$ is curious about a valid transaction $tx_i$, then it can always identify a node $v$ such that it would provide the proof of $tx_i$.
\end{lemma}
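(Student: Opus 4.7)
The plan is to prove the lemma by case analysis on whether $tx_i$ is unspent or has already been spent, reducing each case to an application of Assumption~\ref{pro:ro} (Rational Owner) to whichever node currently holds the value originating from $tx_i$. In both branches the identified node $v$ will turn out to be the receiver of some unspent valid transaction whose validity proof, by Definition~\ref{def:vp}, contains ${\cal P}(tx_i)$ as a subset.

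First I would treat the unspent case. Since $u$ is curious about $tx_i$, $u$ already has $tx_i$ in hand and can directly read off its receiver $d_i$. Because $tx_i$ is valid and unspent, Rational Owner applies to $d_i$ verbatim, so $d_i$ will hand over ${\cal P}(tx_i)$ on request; take $v = d_i$. Next I would address the spent case. By Assumption~\ref{pro:hd} (History Disinterest), $u$'s curiosity about a spent $tx_i$ can only arise from its curiosity about some unspent $tx_j$ whose validity requires the validity of $tx_i$. This dependence forces $tx_i$ into the recursive source-tree of $tx_j$, and Definition~\ref{def:vp} then yields ${\cal P}(tx_j) \supseteq {\cal P}(tx_i)$. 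Applying the unspent case to $tx_j$ identifies $v = d_j$, who supplies ${\cal P}(tx_j)$ and hence ${\cal P}(tx_i)$ as part of it.

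The main obstacle lies in closing a gap in the second case: Rational Owner demands $tx_j$ be valid, whereas History Disinterest only asserts that $u$ is curious about $tx_j$. I would close this gap using the no-double-spending clause of Definition~\ref{def:valtx}: since $tx_i$ is valid and spent, it is the unique source of a unique valid next-spending transaction, and iterating produces a deterministic forward chain of valid transactions terminating in a unique valid unspent descendant $tx_{j^*}$. Any unspent transaction whose source-tree contains $tx_i$ must lie on this chain, so a valid $tx_j$ coincides with $tx_{j^*}$ and the earlier argument applies directly. The most delicate subcase is therefore the one in which the triggering $tx_j$ is itself invalid; here $u$ would instead walk the spend chain starting at $tx_i$, each step being a unique valid successor whose current holder is rationally willing to reveal it, until reaching $tx_{j^*}$, whose receiver $d_{j^*}$ is then the required $v$. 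Making this traversal step fully rigorous, rather than merely an existence argument, is the hardest part of the proof and will likely require an auxiliary observation that the spend pointers are accessible from the sender/holder along each hop.
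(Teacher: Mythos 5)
Your first two paragraphs reproduce the paper's proof essentially verbatim: the unspent case is handled by applying Assumption~\ref{pro:ro} to the receiver of $tx_i$, and the spent case by invoking Assumption~\ref{pro:hd} to obtain an unspent $tx_j$ whose validity depends on $tx_i$, noting ${\cal P}(tx_i) \subset {\cal P}(tx_j)$ from Definition~\ref{def:vp}, and then applying Assumption~\ref{pro:ro} to the receiver of $tx_j$. So on the main line you and the paper coincide.

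Your third paragraph goes beyond the paper, and it is worth separating the two things happening there. The gap you identify is real: Assumption~\ref{pro:ro} requires $tx_j$ to be \emph{valid} and unspent, whereas Assumption~\ref{pro:hd} only delivers an unspent $tx_j$ that $u$ is curious about; the paper silently elides this and applies Assumption~\ref{pro:ro} anyway. However, your proposed repair does not close the gap within the stated model. The uniqueness of the valid forward spend chain from $tx_i$ (via the no-double-spending clause) is fine, but the traversal you describe --- walking from $tx_i$ hop by hop to the valid unspent descendant $tx_{j^*}$ --- requires each intermediate holder to reveal which transaction spent its value, and none of Assumptions~\ref{pro:hd}--\ref{pro:ro} obligates a node to disclose spend pointers for \emph{spent} transactions; Assumption~\ref{pro:ro} only covers furnishing the validity proof of a valid unspent transaction one has received. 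Moreover, $u$ has no a priori way to locate the holders along that chain. So in the subcase where the triggering $tx_j$ is invalid, your argument (as you concede) remains an existence sketch rather than a proof, and the honest conclusion is that the lemma as applied in that subcase rests on reading Assumption~\ref{pro:hd} as only ever producing a valid $tx_j$ --- which is also the unstated reading the paper relies on. For the purpose of matching the paper's result, you could simply stop after your second paragraph; for the purpose of a fully rigorous statement, the model would need an extra assumption, not a longer proof.
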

\begin{proof}
If $tx_i$ is an unspent transaction, this lemma directly follows from Assumption~\ref{pro:ro} since the receiver of $tx_i$ will provide it. 
If $tx_i$ is a spent transaction, then by Assumption~\ref{pro:hd}, $u$ will only be curious about $tx_i$ if $u$ is curious about an unspent transaction $tx_j$ and the validity of $tx_i$ is required for the validity of $tx_j$. By Definition~\ref{def:vp}, we have ${\cal P}(tx_i) \subset {\cal P}(tx_j)$. Hence, by Assumption~\ref{pro:ro}, $u$ can collect the proof of $tx_j$ from the receiver of $tx_j$.
\end{proof}
% , \ref{pro:2}, and \ref{pro:info2}. The idea behind this lemma is rather intuitive. If the information of a transaction is considered valuable by the network, then there will always be an interest-related party in this network that is willing to provide the proof. 
By Lemma~\ref{pro:fpc}, the proof of a transaction $tx_i$ can always be collected reliably and efficiently. By reliably, we mean that by Lemma~\ref{pro:fpc}, the proof can always be collected in the VTL model without any risk of disconnections. By efficiently, we mean that the collection is a simple point-to-point communication without the need of a reliable broadcast scheme like \cite{bracha} to tolerant malicious behaviors. 

However, although by our model the receiver of an unspent transaction is motivated to provide the correct proof, the requester of the proof will not accept it as a proof without his own verification. \confv{The algorithm is omitted here for simplicity and can be found in \cite{ren}.}\fullv{We give the Proof Verification Algorithm $\mathsf{Ver}({\cal P}(tx_i))$ in Algorithm~\ref{alg:pv} in Appendix~\ref{a:2}.}
%However, the proof requires to be verified since the trust is not propagating in our system, i.e., a verified proof by node $u$ is not considered as verified as node $v$. The verification can be done by Algorithm~\ref{alg:pv}.
If $\mathsf{Ver}({\cal P}(tx_i))=\mathsf{pass}$, it suggests that ${\cal P}(tx_i)$ is indeed a validity proof for transaction $tx_i$ since the algorithm is a direct translation from the definition of the validity proof.

\subsubsection{Validation Function}
The deterministic Validation Function is given in Algorithm~\ref{alg:vf}.

\begin{algorithm}
\caption{Validation Function $\mathsf{V}(tx_i, {\cal P}(tx_i)), tx_i= \langle {\rm Source}_i ,s_i, d_i, a_i, r_i \rangle \in B_{u,k}$}
\label{alg:vf}
\begin{algorithmic}
%\State $\langle {\rm Source}_i ,s_i, d_i, a_i, r_i \rangle \gets tx_i$
%\State $\langle *,*,*,*,rs_{i,[1:n]} \rangle \gets txs_{i,[1,n]}$
\State \#{Validity Proof Check}
\If{$\mathsf{Ver}({\cal P}(tx_i)) \ne \mathsf{pass}$} \Return $\mathsf{unknown}$ \EndIf
\State \#{Equality Check}
\If{$\sum($all remaining values from ${\rm Source}_i) \ne a_i+r_i$} \Return $\mathsf{unknown}$ \EndIf
\State \#{Double-Spending Check}
\For{$B_{u,m}, m =[1:k]$}
	\For{All transactions $tx_j$ in $B_{u,m}$}
		\If{${\rm Source}_j \cap {\rm Source}_i \ne \emptyset$ {\bf and} $tx_i \ne tx_j$} \Return $\mathsf{unknown}$ \EndIf
	\EndFor
\EndFor
\State \#{Source Check}
\For{all transactions $tx_j$ in ${\rm Source}_i$}
	\If{$\mathsf{V}(tx_j, {\cal P}(tx_j)) \ne \mathsf{valid}$} \Return $\mathsf{unknown}$ \EndIf
\EndFor
\Return $\mathsf{valid}$
\end{algorithmic}
\end{algorithm}

% The validation function is defined as the following.
% \begin{definition}[Validation Function]
% For a transaction $tx_i, tx_i \in B_{u,k}$ with a validity proof ${\cal P}(tx_i)$ as defined above, assuming that there exists an abstract $A_{u,k'} \in {\cal MC}, k' \geq k, B_{u,k'} \in {\cal P}(tx_i)$, then, the corresponding validation function is
% \mathsf{V}(tx_i, {\cal P}(tx_i))= \left\{ \begin{array}[ll] \mathsf{valid} & (\forall tx_j \in B_{u, k}, {\rm Source}_j \cap {\rm Source}_i = \emptyset ) \vee (\forall tx_l \in {\rm Source}_i, \mathsf{V}(tx_l, {\cal P}(tx_l))=\mathsf{valid}), \\
% \mathsf{unknown} & {\rm otherwise.}
% \end{array} \right.
% \end{definition}

%\subsection{Correctness of the Validation Scheme}

%\subsubsection{Correctness of the Validation Scheme}
The correctness of the validation function is given in the following theorem.
\begin{theorem}\label{th:vs}
$\mathsf{V}(tx_i, {\cal P}(tx_i)) = \mathsf{valid}$ if and only if $tx_i$ is valid.
\end{theorem}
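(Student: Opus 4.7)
My plan is to prove both directions of the biconditional by induction on the maximum source depth of $tx_i$, i.e., the length of the longest chain of source references leading back to an initial (genesis-style, source-free) transaction. The base case is handled by the initial transactions assigned to each node, which are valid by fiat and for which the algorithm's source-loop and equality checks are vacuous once the validity-proof check passes.

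For the forward direction ($\mathsf{V}(tx_i, {\cal P}(tx_i)) = \mathsf{valid} \Rightarrow tx_i$ valid), I would walk through each of the four conditions of Definition~\ref{def:valtx} and match them to the four checks in Algorithm~\ref{alg:vf}. First, the Validity Proof Check succeeding means $\mathsf{Ver}({\cal P}(tx_i))=\mathsf{pass}$, which guarantees that ${\cal P}(tx_i)$ contains a chain $\{B_{u,1},\ldots,B_{u,k'}\}$ with consistent hashes and an abstract $A_{u,k'}$ on the main chain. Applying Theorem~\ref{pro:ct}, $B_{u,k}$ (and hence $tx_i$) is tamper-proof, and because $A_{u,k'}$ carries a digital signature of $u=s_i$, the ``confirmed and authorized'' condition holds. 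The Equality Check is a direct reproduction of the value-equality clause. The Source Check plus the induction hypothesis gives the ``valid sources'' clause. For ``no double-spending,'' I would note that by the UTXO semantics the only node who can legitimately spend any $tx_j \in {\rm Source}_i$ is its receiver, which must be $u=s_i$ (this is precisely what $\mathsf{Ver}$ enforces structurally), so the only place a conflicting $t_{u,k',l'}$ with $k'<k$ or ($k'=k$, $l'<l$) could live is in $u$'s own chain up to block $k$; the Double-Spending Check scans exactly this range.

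For the reverse direction ($tx_i$ valid $\Rightarrow \mathsf{V}(tx_i,{\cal P}(tx_i))=\mathsf{valid}$), I would again proceed condition by condition, showing that each clause of Definition~\ref{def:valtx} forces the corresponding check in Algorithm~\ref{alg:vf} to pass. Validity of $tx_i$ requires confirmation, so an abstract $A_{u,k'}$ with $k'\geq k$ exists on the main chain and all hashes are consistent, hence $\mathsf{Ver}({\cal P}(tx_i))=\mathsf{pass}$. Value equality and no-double-spending pass by their definitional clauses, and for the recursive Source Check each $tx_j \in {\rm Source}_i$ is valid by hypothesis and has strictly smaller source depth, so the induction hypothesis applies and $\mathsf{V}(tx_j,{\cal P}(tx_j))=\mathsf{valid}$.

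The main obstacle I anticipate is the asymmetry of the Double-Spending Check: the algorithm only inspects blocks of $s_i=u$ up to index $k$, whereas the definition quantifies over all valid transactions $tx_{i'}$ in the network. The crux is arguing that if some foreign node $u'\neq u$ had produced a valid $t_{u',k',l'}$ re-using $tx_j$, then by the ownership semantics of UTXO $u'$ could not have been the receiver of $tx_j$, so $\mathsf{V}(t_{u',k',l'},\cdot)$ would have failed its source check — contradicting its validity. This reduction to an ownership-plus-confirmation argument (rather than a direct global scan) is what makes the local check sound, and I would state it as an auxiliary lemma before combining the two directions into the final biconditional.
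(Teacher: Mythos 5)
Your proof is correct and its core coincides with the paper's: both directions come down to the observation that the four checks in Algorithm~\ref{alg:vf} transcribe the four conditions of Definition~\ref{def:valtx}, with the recursion through sources handled on top. The paper dispatches the forward direction in one line and proves the converse by contradiction, first under the hypothesis that every source already validates and then by remarking that a counterexample must occur somewhere in the finite source ancestry; your induction on source depth is the cleaner packaging of that same descent and treats both directions uniformly. Where you genuinely diverge is the double-spending check. You identify the locality of the algorithm's scan (only $B_{u,1},\ldots,B_{u,k}$) versus an allegedly global quantification as the crux and propose an auxiliary ownership lemma to bridge it. But the no-double-spending clause of Definition~\ref{def:valtx} already restricts the conflicting transaction to $tx_{i'}=t_{u,k',l'}$ with the \emph{same} sender $u$ and an earlier position, so the local scan matches the definition essentially verbatim and no cross-node argument is required for this theorem; the cross-node ownership question is deferred by the paper to Theorem~\ref{th:vtl} via Theorem~\ref{pro:ct}. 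Your lemma is not wrong, and it addresses a real system-level concern, but it does work the statement does not demand --- and note that $\mathsf{Ver}$ as written does not actually check that $s_i$ equals the receiver of each source, so you could not lean on it for that purpose. One small discrepancy neither you nor the paper confronts: the algorithm also rejects on a conflicting transaction at position $l'>l$ within block $k$, which the definition does not count as a double-spend; this only arises for blocks an honest node would never construct, but strictly speaking it is a corner case in the ``valid $\Rightarrow \mathsf{valid}$'' direction.
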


This theorem holds since the validation function is a straightforward translation of the definition of the validity. \confv{The detailed proof is given in \cite{ren}.}\fullv{The detailed proof is given in Appendix~\ref{a:4}.}

\subsection{BFT Satisfactory}

Here, we show that our system satisfies the agreement and validity condition of BFT  with a compromised termination condition for all valid transactions.

\begin{theorem}[BFT Satisfactory]\label{th:bft}
Our system satisfies the following conditions in VTL model. Here, we use the term ``node $u$ validates a transaction $tx_i$'' to represent that node $u$ runs a validation function on $tx_i$ with the result $\mathsf{valid}$.
\begin{itemize}
\sloppy \item {\bf Agreement (Consistency):} If an honest node validated a transaction, then, if another honest node is curious about this transaction, it will also validate it.

\item {\bf Validity (Correctness):} %If an honest node proposed a transaction, then for each transaction used as a source, it has been validated by at least one honest node.
If a transaction can be validated by an honest node, then at least one honest node that is curious about it can validate it.
%If an honest node proposed a transaction and at least one honest node is curious about it, this transaction will be validated by at least one honest node.

\item {\bf Termination (Liveness):} If a transaction is proposed by an honest node, then it can be validated in finite time.
\end{itemize}
\end{theorem}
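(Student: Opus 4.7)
The plan is to establish each of the three conditions as a separate consequence of the deterministic validity characterization in Theorem~\ref{th:vs}, the tamper-resistance given by Theorem~\ref{pro:ct}, the reachability of proofs from Lemma~\ref{pro:fpc}, and the BFT termination of PBFT on the main chain. Agreement and Validity will reduce to the ``if and only if'' in Theorem~\ref{th:vs} combined with Lemma~\ref{pro:fpc}; Termination will additionally require PBFT liveness on the main chain and the rational-forwarding assumptions to assemble the proof in finite time.

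For \textbf{Agreement}, I would begin from the hypothesis that an honest $u$ has computed $\mathsf{V}(tx_i, {\cal P}(tx_i)) = \mathsf{valid}$. The ``only if'' direction of Theorem~\ref{th:vs} forces $tx_i$ to be actually valid. For any other honest curious $v$, Lemma~\ref{pro:fpc} supplies a validity proof ${\cal P}'(tx_i)$; since $tx_i$ is valid, the ``if'' direction of Theorem~\ref{th:vs} applied to $(tx_i, {\cal P}'(tx_i))$ also returns $\mathsf{valid}$, so $v$ reaches the same decision. For \textbf{Validity}, the honest node that managed to validate is itself curious and already witnesses the conclusion; if the intended reading is that some further honest curious party must exist, I would exhibit the receiver $d_i$, who by Assumption~\ref{pro:rr} becomes curious about $tx_i$ and then succeeds by the same Theorem~\ref{th:vs} argument as in Agreement.

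For \textbf{Termination}, I would chain three ingredients. First, because $s_i$ is honest, $tx_i$ is constructed with valid sources, value equality, and no double-spending, so $tx_i$ is valid; the honest sender places it in some block $B_{u,k}$ and submits $A_{u,k}$ to the main chain, which by PBFT liveness in the weak asynchronous model appears on-chain in finite time, so $B_{u,k}$ becomes confirmed via Theorem~\ref{pro:ct}. Second, because the honest sender was previously the receiver of every $tx_j \in {\rm Source}_i$, they already hold ${\cal P}(tx_j)$ and therefore ${\cal P}(tx_i)$ by the recursive Definition~\ref{def:vp}; when a curious honest node issues a request, Assumption~\ref{pro:ro} together with Lemma~\ref{pro:fpc} guarantees delivery in finite time. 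Third, Theorem~\ref{th:vs} applied to $tx_i$ with the assembled ${\cal P}(tx_i)$ returns $\mathsf{valid}$, so validation terminates successfully.

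The main obstacle will be making the recursive proof-collection argument precise in Termination: I must ensure that for \emph{every} transaction in the ancestry of $tx_i$ the relevant block segments are actually obtainable, even if some intermediate owners have since gone silent or were adversarial while they held the value. I plan to handle this by exploiting Definition~\ref{def:vp}: because the honest sender of $tx_i$ already possesses ${\cal P}(tx_i)$ as a cached byproduct of having validated each source when they originally received it, the entire recursive structure resides at a single honest node and can be forwarded in one round, bypassing any need to contact every ancestral owner online. The remaining subtlety is arguing that this caching invariant is preserved by the honest protocol throughout the history of the system, which I expect to discharge by a short structural induction on the length of the sender's individual chain.
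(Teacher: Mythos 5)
Your proposal is correct and follows essentially the same route as the paper: Agreement and Validity reduce to the biconditional of Theorem~\ref{th:vs} combined with the proof-collection guarantee of Lemma~\ref{pro:fpc}, and Termination chains honest construction of the transaction, PBFT liveness on the main chain, and Theorem~\ref{th:vs}. Your extra care in Termination about the sender already holding the full recursive proof (having validated each source upon receipt under Assumption~\ref{pro:rr}) is a sensible filling-in of a detail the paper's terser argument leaves implicit, not a different approach.
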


\begin{proof}
$ $

\begin{itemize}
\sloppy \item {\bf Agreement:} If a transaction $tx_i$ is validated by an honest node, i.e., $\mathsf{V}(tx_i, {\cal P}(tx_i)) = \mathsf{valid}$, then, by Theorem~\ref{th:vs}, $tx_i$ is valid. By Lemma~\ref{pro:fpc}, if another node is curious about $tx_i$, the proof can be collected. Then, since the validation function is deterministic, another curious honest node will also run the validation function and the result will be $\mathsf{valid}$.

\item {\bf Validity:} %If a transactions is sent by an honest node, by the definition of the honest node the transaction will be confirmed. Then, by Property~\ref{pro:rr}, the receiver, who is also an honest node, will validate it and the result will be valid.
%In our system, this property is automatically holds since by definition honest nodes will only make transactions that they can validated, which requires the validation of all the sources. In other words, in our system, the validation result is never communicated and can only be obtained by locally execution of the validation function.
Firstly, by Theorem~\ref{th:vs}, a validated transaction is equivalent to a valid transaction. Then, by Lemma~\ref{pro:fpc} its proof can be collected by an honest curious node and by Theorem~\ref{th:vs} it will be validated.

%If a transaction is proposed by an honest node, by the definition of the honest node, it should follow the rule of making transactions with valid sources, value equality, and no double-spending as suggested in Subsection~\ref{ss:ic}. Then, by the BFT satisfactory of the PBFT scheme we used for the main chain, this transaction will eventually be confirmed and meets all requirement of a valid transaction. Then, by Theorem~\ref{th:vs}, this transaction can be validated. Then, by Lemma~\ref{pro:fpc}, if an honest node is curious about it, he can collect the proof can validate it.

%, at a certain time $T_0$, we have $\mathsf{V}(tx_i, {\cal P}(tx_i)) = \mathsf{valid}$. Then, by Lemma~\ref{pro:fpc}, the proof of this transaction can be collected. Then, since the validation function is deterministic, the honest node that is curious about it will validate it.

\item {\bf Termination:} If a transaction is proposed by an honest node, by the definition of the honest node, it should have valid sources, value equality, and no double-spending as suggested in Subsection~\ref{ss:ic}. Then, by the BFT satisfactory of the PBFT scheme we used for the main chain, this transaction will eventually be confirmed and meets all requirement of a valid transaction. Then, by Theorem~\ref{th:vs}, it can be validated. % Since our validation scheme is deterministic, the transaction will also be validated by other honest nodes.
\end{itemize}
\end{proof}

%The differences between our system and the traditional distributed ledger system with classical BFT conditions can be seen as the following. If a transaction is valid, the transaction is guaranteed to be known by all honest nodes that are curious about it and the validity is consistent. If a transaction is invalid, no honest node will validate it and use it as the source for their transactions. However, there is no guarantee that an invalid transaction will be known by all honest node.
%The insight of Theorem~\ref{th:bft} is the following. Firstly, we explicitly focus on achieving BFT consensus for valid transactions. Then, for each valid transaction, we introduce the curious nodes based on Property~\ref{pro:hd} of the VTL model, which sets a constraint on the nodes that would like to know and validate the transaction and thus compromises the termination condition. However, we prove that the agreement and validity conditions are not compromised.

\subsection{Validity of the System}
Now we show that our system is a valid VTL system by showing all three conditions in Definition~\ref{def:vvtls} are guaranteed in the VTL model.
\begin{theorem}[Validity of Our System]\label{th:vtl}
A system described in this section is a valid VTL system.
\end{theorem}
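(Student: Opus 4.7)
The plan is to reduce each of the three conditions of Definition~\ref{def:vvtls} to results already established, principally Theorem~\ref{th:bft} (which gives Agreement, Validity, and Termination of the validation function), Theorem~\ref{th:vs} (equivalence between the validation function returning $\mathsf{valid}$ and the transaction being valid per Definition~\ref{def:valtx}), Theorem~\ref{pro:ct} (tamper-proofness of confirmed blocks), and Lemma~\ref{pro:fpc} (feasibility of proof collection). The key observation is Remark~\ref{rm:bftvtl}: BFT consensus is sufficient but not necessary for a valid VTL, and our system provides exactly the weaker form needed.

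First I would dispatch \textbf{Validity} since it is the most direct: an honest node, by definition, only declares a transaction valid if the Validation Function returns $\mathsf{valid}$. By Theorem~\ref{th:vs}, this forces the transaction to satisfy Definition~\ref{def:valtx}, i.e., be valid, so no invalid transaction is ever accepted. Next I would handle \textbf{Fluidity}. Suppose both sender and receiver are honest and the sender issues a valid transaction $tx_i$. By the Termination clause of Theorem~\ref{th:bft}, $tx_i$ can be validated in finite time; then, by Assumption~\ref{pro:rr} (Rational Receiving) the receiver is curious about $tx_i$, and by Assumption~\ref{pro:ro} applied recursively along the chain of sources together with Lemma~\ref{pro:fpc}, the receiver can actually collect ${\cal P}(tx_i)$. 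Running the (deterministic) Validation Function then yields $\mathsf{valid}$ in finite time.

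The main obstacle, and the step I would spend most effort on, is \textbf{Ownership}, since it is the only one that has a genuine double-spending flavor rather than following mechanically from Theorem~\ref{th:bft}. I would split it into two parts. For the positive part (the honest receiver of a valid $tx_i$ can himself make a valid transaction with $tx_i$ as source), I would construct such a transaction explicitly: the receiver puts a new transaction $tx_j$ with ${\rm Source}_j = \{tx_i\}$ into his next individual block, and by the Termination clause of Theorem~\ref{th:bft} an abstract of that block will eventually be accepted on the PBFT main chain, confirming $tx_j$ and making $\mathsf{V}(tx_j, {\cal P}(tx_j)) = \mathsf{valid}$ via Theorem~\ref{th:vs}. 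For the negative part (no other node can make a valid transaction using $tx_i$ as source), the argument has two layers. Only the owner's private key can authorize a transaction spending $tx_i$, so no distinct node can place a competing $tx_{i'}$ with ${\rm Source}_{i'} \ni tx_i$ into the owner's chain without forging a signature; and the owner himself cannot spend $tx_i$ twice because the Double-Spending Check in Algorithm~\ref{alg:vf}, combined with the tamper-proofness of confirmed blocks guaranteed by Theorem~\ref{pro:ct} and the single-sender ordering of individual chains, ensures that at most one transaction in the owner's confirmed history can have $tx_i$ in its source set.

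Putting these three together yields that the system satisfies Definition~\ref{def:vvtls} under Assumptions~\ref{pro:hd}--\ref{pro:ro}, which is exactly the claim of Theorem~\ref{th:vtl}. The delicate point worth writing out carefully is the interaction in the Ownership argument between PBFT's consensus on abstracts and the local double-spending check on the owner's individual chain, since a Byzantine owner might try to fork his own chain; here Theorem~\ref{pro:ct} is the crucial hammer, because it rules out the existence of two distinct confirmed histories agreeing on the abstract that lands on the main chain.
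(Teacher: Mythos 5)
Your proposal is correct and follows essentially the same route as the paper: all three conditions of Definition~\ref{def:vvtls} are reduced to Theorem~\ref{th:bft}, with Theorem~\ref{pro:ct} invoked for the negative half of Ownership and Theorem~\ref{th:vs} plus Lemma~\ref{pro:fpc} doing the underlying work. Your treatment of Ownership is somewhat more careful than the paper's one-line version --- in particular you explicitly rule out the owner double-spending his own output via the Double-Spending Check and the tamper-proofness of his confirmed chain, a case the paper's proof leaves implicit --- but this is a refinement of the same argument, not a different one.
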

\begin{proof}
$ $
\begin{itemize}
\item {\bf Ownership:} By the Validity condition of Theorem~\ref{th:bft}, honest nodes can make valid transactions with valid received transactions as sources.
Meanwhile, if another node makes a transaction with sources that he is not the receiver, by Theorem~\ref{pro:ct}, the proof will be considered as incorrect and the validation will fail.
\item {\bf Fluidity:} This condition is guaranteed by combining all three conditions in Theorem~\ref{th:bft}.
\item {\bf Validity:} This condition directly follows from the Validity Condition in Theorem~\ref{th:bft}.
\end{itemize}
\end{proof}
%First, by Theorem~\ref{th:bft}, a transaction send by an honest node will always be a valid transaction and can be validated by its receiver in finite time by any honest node, which guarantee the first condition. Then, by the definition of a proof, the only difference in the proof of a transaction $tx_i$ and its source $tx_j$ is the chain of $tx_i$, which can be independently provided by the node $i$, which guarantee the second condition. The third condition is guaranteed by the Theorem~\ref{th:vs}. 
The insight of Theorem~ref{th:vtl} showing our system as a valid VTL system is that our system does not guarantee BFT for generic data and cannot be used in applications where Assumption~\ref{pro:hd}-\ref{pro:ro} do not hold. For example, valid transactions cannot be validated by any other nodes if the sender refuses to offer the proof to any other nodes. However, this scenario is basically denying the value of the sender himself and thus should not happen in the VTL model with UTXO structure if the nodes are rational. On the other hand, if the transactions are in the form of debits instead credits and the receiver is the interested party, our system could guarantee neither the BFT condition nor the conditions for a valid VTL system.

\section{Performance Analysis and Spontaneous Sharding}\label{s:ana}

In this section, we will give explanations for the scale-out claim that we made for the throughput. First we show that the throughput of our system is scalable even in the worst case and will naturally scale out if the transaction pattern is already sharded. Then, we explain why and how our system could spontaneously shard. We give examples with theoretical and simulative analysis to show that the feasibility of spontaneous sharding as well as the scale-out throughput.

% similar as two closely related approaches: off-chain and sharding, it is very hard to make a fair comparison between our scheme and traditional schemes in throughput, since the throughput heavily depends on the transaction pattern. Hence, we compare the performance of our scheme to other off-chain and sharding schemes to show that our scheme merges the advantages from both solutions.

\subsection{Communication Cost Per Transaction}\label{ss:ccpt}

%In traditional BFT schemes like \cite{castro,bracha}, the CCP is $O(N^2)$. 
%In our system, similar to some recent BFT algorithms like \cite{cachin,miller}, we reduce the CCP to $O(N)$ by packeting transactions and running traditional BFT algorithms on the hashes of the packets. 
In our system, the main chain is using PBFT with $O(N^2)$ message complexity. However, the number of transactions associated with one abstract in the main chain are arbitrary and independent of the main chain. As a result, the communication cost of the main chain can be made into a negligible term in CCPT if we choose the number of transactions associated with one abstract to be $\omega(N^2)$. The duration of the consensus process still plays an important role in the latency of our system. However, note that the PBFT-based scheme is used only for easy comprehension and can be easily replaced by other scalable and low latency blockchain systems to improve the latency. 

Then, we make a few assumptions for easier analysis of the CCPT. We assume that rational nodes will not care about invalid transactions and thus will not try to re-collect the proof of a transaction if it is failed for a number of times. In other words, malicious nodes cannot spam invalid proofs to jam the network. For the simplicity in analysis, we only consider the resource limited network with Property~\ref{pro:rcsr}. Then, in our system, the CCPT can be represented by $p/T$, where $p$ is the total communication cost of all proofs and $T$ the total number of transactions made by the whole network. 

In general, the proof of a transaction $tx_i$ includes the chains of the sender, the senders of all sources of this transaction, and the senders of, recursively, the sources of the sources. In most blockchain systems, the storage is traded for validation efficiency, i.e., the validated transactions and their proofs are stored and the proofs of new transactions are collected incrementally. 
Then, in the worst case when the proof of any transaction includes the chains of all nodes, if the storage is not limited, all nodes simply need to be updated with all transactions in the network.  Each transaction will be communicated by $O(1)$ message per each node due to Property~\ref{pro:ro} and Lemma~\ref{pro:fpc} since a point-to-point based collection is sufficient to guarantee reliability and there is no need for BFT reliable broadcast schemes. Then, for each node, the communication cost for all its proofs is $O(1)T$. For the whole network, $p=O(1)NT$ and the CCPT is thus $O(N)$.

A better case would be that the transaction pattern is separated into shards and the nodes only make intra-shard transactions. In that case, the proof of any transaction contains the chains of only the nodes in their shards and the CCPT is $O(g)$, where $g$ is the size of the shard. As a result, our system achieves scale-out throughput.

\subsection{Spontaneous Sharding}\label{ss:ss}

Here, we consider a more interesting case that the transaction pattern is not separated into small shards and show that our system could still achieve scale-out throughput if all nodes behave rationally. We call this spontaneous sharding. The idea behind the spontaneous sharding is simple: rational nodes will try to minimize their transmission and storage costs by minimizing the proof size of each transaction as well as the number of recorded transactions. Then, the minimum cost of either transmitting or storing the transaction is actually the cost of proving the authenticity of the value in the transaction, which depends on the number of nodes that this value has been passed through.

More precisely, let us focus on a piece of value originated from a genesis block. For each time that it is transferred to a node, a confirmed chain of that node is included in the proof of the transaction of that value. Meanwhile, if the value is used together with other sources to make transaction, then these pieces of value are combined as well as their proofs. Hence, rational node will always avoid combining values for the sake of transmission and storage costs. Besides, the cost can be reduced by trying to make each piece of value only cycling in a small shard of the network, which makes the proof of the value only contain the chains of the nodes in that shard. Let us denote $P_i$ for the set of chains that are included in the proofs of all values owned by node $i$. By the analysis made in Subsection~\ref{ss:ccpt}, there is minimum overhead for proof collection in our system and the communication cost for each transaction per node is $O(1)$ messages. Hence, the CCPT in our system can be calculated as $O(g)$, where $g=E[|P_i|]$. This is called spontaneous sharding since the throughput increased as if the network has been naturally sharded.

Here, we give a more detailed analysis with graph theory. Let us consider the transaction pattern as a weighted directed graph $G(V,\vec{E})$, where the vertices $v \in V$ represent the nodes in the network. Then, instead of actually transactions, the edges $e=(u,v,w), e\in \vec{E}$ represent the transaction channels and their capacities, i.e., the existence of transactions between the sender $u$ and receiver $v$, and the rate for the transactions (amount per second) denoted by $w$. Then, we assume a stable and sustainable value-transfer network, where all nodes have equal amount of income and outcome in a long term. Let us denote the sets for inbound edges and outbound edges of node $i$ as $I_i$ and $O_i$, respectively, i.e., $I_i= \{e \in \vec{E} : e=(u,i,w)\}$, $O_i= \{e \in \vec{E} : e=(i,u,w)\}$.
%Let us denote the inbound edges, outbound edges, inbound neighbors, and outbound neighbor sets of node $i$ as $VI_i$, $VO_i$, $NI_i$, and $NO_i$, respectively, i.e., $VI_i= \{e \in \vec{E} : e=(u,i,w)\}$, $VO_i= \{e \in \vec{E} : e=(i,u,w)\}$, $NI_i = \{v \in V: \exists (v,i,w) \in \vec{E}\}$, $NO_i = \{v \in V: \exists (i,v,w) \in \vec{E}\}$.

When a piece of value sent by node $i$ via edge $e=(i,u,w) \in O_i$ to node $u$, this amount of value $w$ should return to node $i$ through a path. The nodes on the path form a node set, denoted by ${\cal N}_e \subseteq V$. In our system, the value will then return with all the chains of nodes in ${\cal N}_e$. The same holds for the values received from the edges in $I_i$ as each piece of these values will eventually return to the corresponding inbound neighbor of node $i$. As a result, we have $P_i=\{v \in {\cal N}_e : e \in I_i \cup O_i\}$.

There are two ways to optimize $P_i$ and achieve spontaneous sharding: local optimization and global optimization.

A local optimization can straightforwardly be done by the following:
according to the information about the chains that the receiver already has, the sender will choose from all its unspent transactions for the ones with the least amount of required proofs. For example, node 1 has transacted with a receiver node $2$ who has already acquired the chains of $\{3,4,5,6\}$ in this round. Then, if node 1 has this information, it will prefer to use the unspent transactions with proofs that consist of the chains from $\{3,4,5,6\}$ for transactions to node 2 so that it does not need to send proofs anymore. \confv{We give a naive smart transacting algorithm in \cite{ren}.}\fullv{In Appendix~\ref{a:3}, a naive smart transacting algorithm is given to achieve local optimization.}

A global effort could be made by letting all nodes broadcast their acquired chains each consensus round. This effort is spontaneous and beneficial to the nodes themselves, so a reliable broadcasting scheme is not necessary. An additional $O(N^2g)$ communication cost is required each round, which adds at most $O(g)$ to the CCPT. With the global information, some optimization schemes could be run locally as references for the source selection when nodes send transactions. The purpose of the global optimization is to route all values with the same sink through paths that include the minimum number of nodes. A feasible global optimization scheme is a non-trivial problem that we leave for future research.

However, the performance of the sharding and the eventually throughput depends heavily on the network model and the transaction pattern. Here, we give theoretical analysis showing the possibility of our system to scale-out in some large random networks. Then, we use simulation to show spontaneous sharding is feasible even in small networks.

\begin{remark}[Tragedy of the Commons]
It seems that the spontaneous sharding would only happen if all nodes perform rationally and cooperate, which will fall into the pitfall of tragedy of the commons \cite{hardin} if some nodes with high capacity do not optimize their proof sizes. However, this system is not identical to the tragedy of the common scenario since spontaneous sharding could also happen locally so that transmission cost is a private resource rather than public resource. In other words, a group of resource limited nodes can optimize their transactions locally and reduces their transmission cost without needing global cooperation.
\end{remark}

\subsection{Examples}

Firstly, we show that our system with a global optimization scheme will scale out in large random networks.

\begin{example}\label{ex:1}
We consider a random directed weighted graph constructed as the Erd{\H{o}}s-R{\'e}nyi model with $N$ nodes, $M$ edges, connectivity $p=\frac{M}{N(N-1)}$, and $p$ is larger than $\frac{\ln N}{N}$ so that the network is fully connected. We define $f$ as the ``weight factor'', which is the average number of transaction channels required for a piece of value, i.e., $f= E [ \lceil \frac{w}{E[w]} \rceil]_w$. This value is 1 in unweighted graph and is $O(1)$ if $w$ follows Poisson distribution.
\end{example}

For Example~\ref{ex:1}, let us consider the average size of $P_i$. Firstly, if our system is globally optimized, we will have
\begin{eqnarray}
E[|P_i|] &=& E[|\{v \in {\cal N}_e : e \in I_i \cup O_i\}| \\
&\leq & E[\sum_{e\in I_i \cup O_i} |{\cal N}_e|]. \label{eq:1}
\end{eqnarray}
Hence we focus on one path in which the value from an outbound edge $e$ from node $i$ flows back to $i$. The average path length is denoted by $l$. Then, due to the limited capacity of the edges, this value requires $\sim fl$ edges to be delivered. Hence, there are $\sim fl$ nodes in the set ${\cal N}_e$. Combining this with (\ref{eq:1}) we have
\begin{eqnarray}
\lefteqn{E[\sum_{e\in I_i \cup O_i} |{\cal N}_e|]} \\
&\sim & 2cfl \label{eq:12} \\ 
& \sim & 2f \cdot c \frac{\ln N}{\ln pN}  \label{eq:2} \\
& \sim & 2f \cdot c \frac{\ln N}{\ln c}, \label{eq:3}
\end{eqnarray}
where $c$ is the average inbound (outbound) connectivity, which equals to $M/N=p(N-1)$. Here, we have (\ref{eq:12}) since the average number of elements in $I_i \cup O_i$ is $2c$. Then, (\ref{eq:2}) follows from the average path length $l \sim \frac{\ln N}{\ln pN}$ from the random graph. Combining (\ref{eq:3}) with (\ref{eq:1}) and observing that $2f \cdot c \frac{\ln N}{\ln c}$ is dominated by $N$ when $c=o(\frac{N}{f\ln N})$, we have the following condition: if conditions
\begin{equation}
\left\{ \begin{array}{l} p > \frac{\ln N}{N} \\ p = o(\frac{1}{f\ln N}) \end{array} \right.
\end{equation}
hold, our system could scale out if a global optimization scheme is used for spontaneous sharding, i.e., $E[|P_i|] = o(N)$.
Here, $f$ will be $O(1)$ if the transaction rates between nodes follow Poisson distribution. In that case, for instance, if $p = O( \frac{\ln N}{N})$ which suggests $c = O(\ln N)$, each node will only need to acquire on average $O(\frac{\ln N \cdot \ln N}{\ln \ln N})$ chains.

Then, we give an example showing that even without global optimizations, a naive local optimization scheme could already result in the reduction of acquired chains in a small network. In order to show that the spontaneous sharding could be achieved in the worst case, we artificially construct an extreme network.

% The detection can be both non-interactive and interactive. The former is based on the stored transactions of the sender which partly reveal the transaction pattern and the latter can be done with the receiver straightforwardly telling the sender about the chains that it already has.

\begin{example}
We consider $N$ nodes $\{1,2,\ldots, N\}$ placed in a ring and each node transacts to the next $c$ nodes on its right and receives from the next $c$ nodes on its left. Each node is given an initial amount of value and will uniformly at randomly make transactions to the $c$ nodes. The frequency and the amount of the transaction follow Poisson and uniform distributions, respectively.
\end{example}

We run a simulation with our system for $N=10,15,20,25$ nodes with difference connectivity $c$. We simulate the communication between nodes with Netty\footnote{\url{http://netty.io/}} and the main chain with Tendermint \cite{tendermint}. We apply a naive smart transacting algorithm in which the sender simply checks his own transaction records for the information about the $P_i$ of his receivers and choose the sources accordingly. The implementation details and the source code of our system and the simulation can be found in \url{https://github.com/blockchain-lab/ScaleOutDistributedLedger}.
%In Fig.~\ref{fig:1}, we fix $c=2$ and show the average numbers of chain recorded in nodes, i.e., $g$, as they continuously make transactions. It is shown that stable states with shards will be reached with the spontaneous effort of minimizing the proof size locally by the individual nodes. 
Fig.~\ref{fig:2} shows the scenarios of sharding in stable states. It can be observed that when $c$ is small, we have $g<N$, meaning spontaneous sharding is achieved. However, when $c$ is equal to or larger than $3,4,4,6$ for $10,15,20,25$ nodes, respectively, all nodes would acquire all chains. Then the throughput of our system is no better than other scalable blockchain systems. It is due to the naiveness of the algorithm that we use for local optimization, e.g., the current algorithm does not avoid combining sources with different proofs into a single transactions and does not distinguish between chains received only once and chains being continuously updated. With better optimization algorithms, we believe that smaller $g$ can be achieved for larger $c$.

% \begin{figure}
%   \centering
%     \includegraphics[width=0.45\textwidth]{fig2}
%     \caption{Average number of chains acquired by nodes as the total number of transactions grows when $c=2$.}\label{fig:1}
% \end{figure}

\begin{figure}
  \centering
    \includegraphics[width=0.45\textwidth]{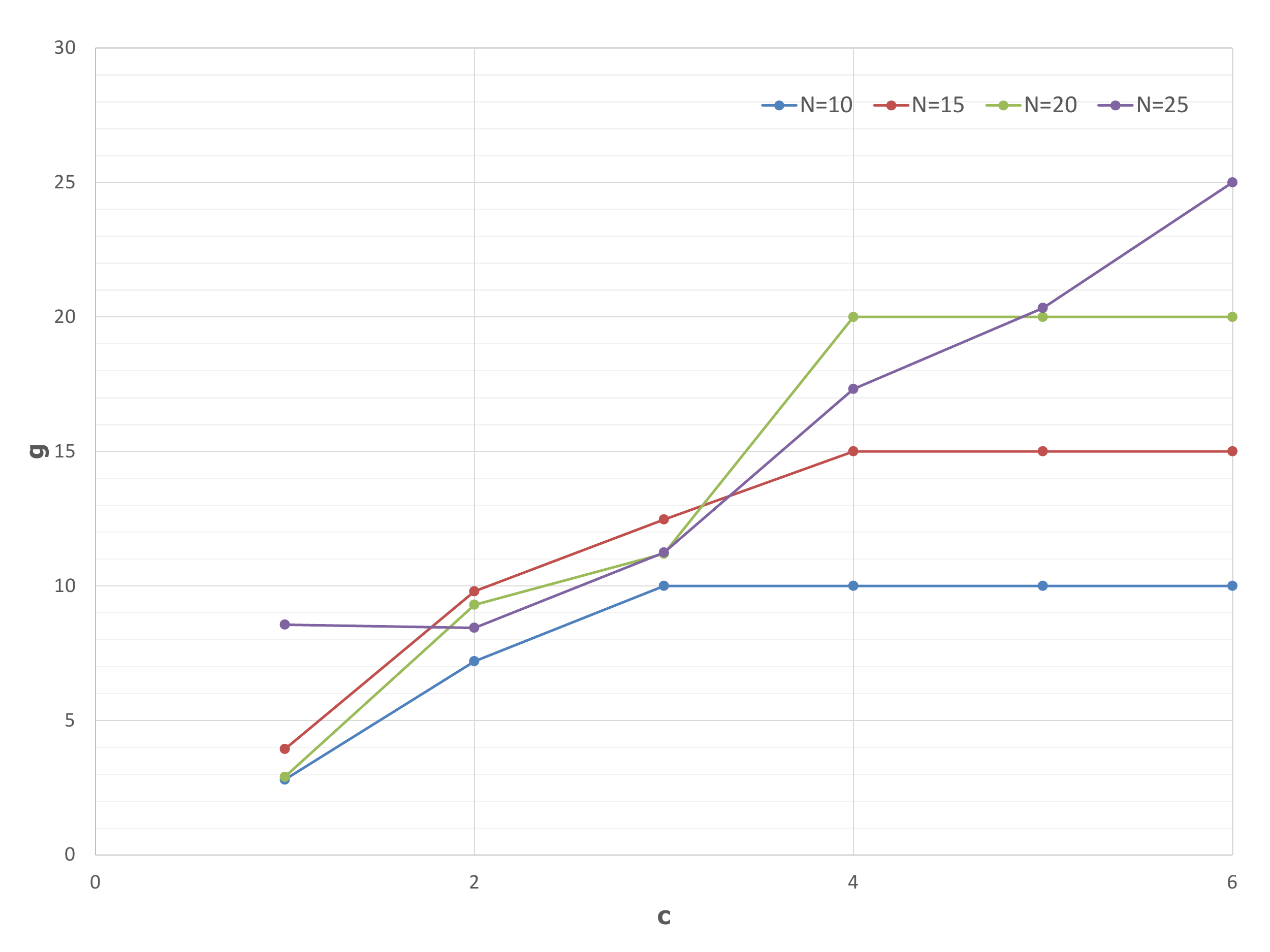}
    \caption{Average number of chains acquired by nodes for various connectivity in stable states.}\label{fig:2}
\end{figure}

\section{Conclusion and Future Work}\label{s:con}

In this paper, we proposed a novel blockchain system for the most considered type of distributed ledgers which we called VTL model. In VTL model, we assume that nodes are rational and will be motivated to prove their possessed value. Our system has a very simple and fully decentralized structure that does not introduce any node serving as ``validator''. Our system achieves uncompromised agreement and validity conditions and could scale out by spontaneous sharding without sacrificing security or decentralization. However, as the focus of this paper is put on the formal theoretical introduction of VTL and the off-chain and proof-based framework, many refinements in practical perspective are left for future research.

\begin{itemize}
\item {\bf Checkpoints to improve storage efficiency:} As sharding is a spontaneous and gradual process that might requires a initial phase, it might happened that nodes are required to record the whole transaction set until sharding starts. Then, the storage cost per transaction will not scale-out and cost new nodes quite heavily to join. This problem could be mitigated by introducing checkpoints in the main chain, which verifies the validity of certain values so that later on the proofs of these values do not have to date back to the genesis blocks. However, this do requires the newcomers to trust the old nodes who verified these values.
\item{\bf Private channels for low latency payments:} Private off-chain channels like \cite{lightning} is complicated in traditional blockchains since the notion of value is hinged to the on-chain ledger. However, as in our system the value is off-chain by nature, private channels for low latency micro-payments can be easily designed.
\item {\bf Supportive to conditional payments/smart contracts:} We conjecture that conditional payments and smart contracts can also be supported by this system with modified data structure and validation scheme as long as each transaction includes some value transferred to at least one of the receivers. Such system will simultaneously achieve sharding on both communication/storage resources and computation resources.
\item {\bf Real-world Implementation:} We conjecture that our system will also scale-out in more practical networks models \cite{banet,wsnet} or real-life transactions patterns \cite{wtw,ipf}. Moreover, we believe that for most of the cryptocurrencies nowadays, our system will be very beneficial in throughput since most of them are very ``trader-centric''. Then, for most users who only transact with traders, their transmission and storage cost can be significantly saved if the traders apply local optimization.
\item {\bf Discrimination and hidden forks:} As the proof size of transaction can be very different, it might cause issues of discrimination and hidden forks, e.g., values with huge proof sizes are refused by some nodes in the network, effectively causing a fork of the chain. This problem can be partially solved if checkpoints are used. However, we do not necessarily see this as a problem and argue that this is no more dangerous than forks in traditional blockchains.
%\item {\bf Using the features of data:} The realistic interpretation of transaction and value is implicitly addressed by Nakamoto-like consensus, where rational nodes take effort in reaching consensus on their issued transactions. Recently, \cite{spectre,redbelly} also proposed the separation of the validity and agreement, i.e., distinguishing transactions from generic data considered by traditional BFT consensus schemes and treating valid and invalid transaction differently. We believe that in many scenarios, the throughput of blockchain system can be significantly improved if the features of the system and data are taken into account.
%\item {\bf Non-VTL model:} Our system exploits the features of VTL and thus only works if the receiver has interest in the transaction. Otherwise the receivers will not be motivated to provide the proof of the transaction and other nodes will not be able to know this transaction unless they contact the senders. How to use a similar system for non-VTL models remains a non-trivial problem.	
%\item {\bf Total ordering:} The consensus of individual transactions in our system is explicitly put on the validity of the transactions. The order of the transactions is determined by the appearances of their corresponding abstracts on the main chain. Hence, the transactions that confirmed by the same block in the main chain is considered as simultaneous. The consensus of the ordering of the transactions remains an interesting problem, especially for some applications that requires total ordering of transactions.
\end{itemize}

\section*{Acknowledgment}

This work was supported by NWO Grant 439.16.614 Blockchain and Logistics Innovation.

The authors would also like to thank Chiel Bruin and Karol Jurasinski from Delft University of Technology for the help in implementing the scheme and the anonymous reviewers for the insightful comments.

\bibliography{Implicit_Consensus}
%
% ---- Bibliography ----
%
% \begin{thebibliography}{5}
% %
% \bibitem {clar:eke}
% Clarke, F., Ekeland, I.:
% Nonlinear oscillations and
% boundary-value problems for Hamiltonian systems.
% Arch. Rat. Mech. Anal. 78, 315--333 (1982)

% \bibitem {clar:eke:2}
% Clarke, F., Ekeland, I.:
% Solutions p\'{e}riodiques, du
% p\'{e}riode donn\'{e}e, des \'{e}quations hamiltoniennes.
% Note CRAS Paris 287, 1013--1015 (1978)

% \bibitem {mich:tar}
% Michalek, R., Tarantello, G.:
% Subharmonic solutions with prescribed minimal
% period for nonautonomous Hamiltonian systems.
% J. Diff. Eq. 72, 28--55 (1988)

% \bibitem {tar}
% Tarantello, G.:
% Subharmonic solutions for Hamiltonian
% systems via a $\bbbz_{p}$ pseudoindex theory.
% Annali di Matematica Pura (to appear)

% \bibitem {rab}
% Rabinowitz, P.:
% On subharmonic solutions of a Hamiltonian system.
% Comm. Pure Appl. Math. 33, 609--633 (1980)

% \end{thebibliography}

\fullv{\appendices

\section{Proof of Theorem~\ref{pro:ct}}\label{a:1}

\begin{proof}
\sloppy We proof this theorem by contradiction. If there exists a chain $\{B'_{u,1},B'_{u,2}, \ldots, B'_{u,k'}\}$ which consists of confirmed block and $t'_{u,k,l} \ne tx_i$. First of all, confirmed blocks suggest that all abstracts of $\{B'_{u,1},B'_{u,2}, \ldots, B'_{u,k'}\}$ is on the main chain. Then, since $t_{u,k,l}=tx_i$ is confirmed by an abstract $A_{u,k'}$, all abstract of the chain $\{B_{u,1},B_{u,2}, \ldots, B_{u,k'}\}$ are also on the main chain. Moreover, since the abstracts are signed and the digital signatures are assumed to be unbreakable, the chains $\{B'_{u,1},B'_{u,2}, \ldots, B'_{u,k'}\}$ and $\{B_{u,1},B_{u,2}, \ldots, B_{u,k'}\}$ will have the same set of abstracts of node $u$ on the main chain, which includes $A_{u,k'}$ as the abstract of both $B'_{u,k'}$ and $B_{u,k'}$. Since the hash function is assumed to be unbreakable, we have $B'_{u,k'}=B_{u,k'}$.
Then, if $t'_{u,k,l} \ne tx_i$, then we will have $B'_{u,k} \ne B_{u,k}$ and $B'_{u,k'}=B_{u,k'}, k' \geq k$, which contradicts our assumption of the unbreakable hash function. 
\end{proof}

\section{Proof Verification Algorithm}\label{a:2}

Here we give the Proof Verification Algorithm as Algorithm~\ref{alg:pv}.
\begin{algorithm}[h]
\caption{Proof Verification Algorithm $\mathsf{Ver}({\cal P}(tx_i)), tx_i \in B_{u,k}$}
\label{alg:pv}
\begin{algorithmic}
\State \#Verify the chain including this transaction
\State $\mathsf{count} \gets 0$
\State $\mathsf{absmark} \gets 0$
%\State $txs[1:n] \gets {\rm Source}_i$
%\State $s \gets s_i$
\For{$B_{s_i,m}, m=1:\max$} \Comment{Check the integrity of the chain}
	\If{$tx_i \in B_{s_i,m}$} $\mathsf{count} ++$ \EndIf
    \If{$m \ne 1$ {\bf and } first element in $B_{s_i,m} \neq H(B_{s_i,m-1})$} \Return $\mathsf{fail}$ \EndIf 
	\If{$A_{s_i,m}$ is included in the main chain}
    	\State $\mathsf{absmark} \gets m$
		\If{$H(B_{s_i,m}) \notin A_{s,m}$ {\bf or }$Sig_u(u ,k,H(B_{u,k}))$ is not correct}  \Return $\mathsf{fail}$ \EndIf
	\EndIf    
\EndFor
\If{$\mathsf{absmark} < k$} \Return $\mathsf{fail}$ \EndIf  \Comment{Check the confirmation}
\If{$\mathsf{count} \neq 1$} \Return $\mathsf{fail}$ \EndIf \Comment{Check the existence of the transaction}
\State \#Verify the chains of the sources
\For{all $tx_j \in {\rm Source}_i$} 
	\If{$\mathsf{Ver}(tx_j) \neq \mathsf{pass}$} \Return $\mathsf{fail}$ \EndIf
\EndFor
\Return $\mathsf{pass}$
\end{algorithmic}
\end{algorithm}

\section{Proof of Theorem~\ref{th:vs}}\label{a:4}
\begin{proof}
We first prove that if $\mathsf{V}(tx_i, {\cal P}(tx_i)) =\mathsf{valid}$ then $tx_i$ is valid. It directly follows from the four checks in Algorithm~\ref{alg:vf} since they are exactly the conditions in Definition~\ref{def:valtx}.

%We first prove that if $tx_i$ is valid then $\mathsf{V}(tx_i, {\cal P}(tx_i)) =\mathsf{valid}$. First, if $tx_i$ is valid, by IM, the receiver of this transaction is willing to provide the proof.
%Then, it can be observed from Algorithm~\ref{alg:vf} that this transaction will pass all checks if it is a valid transactions since the checks are exactly the conditions in Definition~\ref{def:valtx}.

\sloppy We then show that if $\mathsf{V}(tx_i, {\cal P}(tx_i)) \ne \mathsf{valid}$ then $tx_i$ is not valid. To prove this, we first prove the statement ``if $\mathsf{V}(tx_i, {\cal P}(tx_i)) \ne \mathsf{valid}$ and $\forall tx_j \in {\rm Source}_i, \mathsf{V}(tx_j, {\cal P}(tx_j)) = \mathsf{valid}$, then $tx_i$ is not valid.''

\sloppy We prove this statement by contradiction. Assuming that there exists a transaction $tx_k$ such that $\mathsf{V}(tx_k, {\cal P}(tx_k)) \ne \mathsf{valid}$ but for all $tx_j \in {\rm Source}_k, \mathsf{V}(tx_j, {\cal P}(tx_j)) = \mathsf{valid}$, and $tx_k$ is valid.

By our algorithm, at least one of the four checks other than the ``{\bf Source Check}'' is failed. If the step ``{\bf Proof Check}'' fails, it suggests that a proof ${\cal P}(tx_i)$ does not exist, which contradicts the assumption that $tx_i$ is valid. If the step ``{\bf Equality Check}'' fails, it contradicts the {\bf Value equality} condition of valid transaction. If the step ``{\bf Double-Spending Check}'' fails, it contradicts the {\bf No double spending} condition of valid transaction.

We then prove that if $\mathsf{V}(tx_i, {\cal P}(tx_i)) \ne \mathsf{valid}$ then $tx_i$ is not valid by contradiction. If this does not hold, then there must exist a transaction that violates the statement proved above. This transaction might be $tx_i$, the source of $tx_i$, or recursively one in the sources of the sources.
%We then show that if $tx_i$ is not valid then $\mathsf{V}(tx_i, {\cal P}(tx_i)) \ne \mathsf{valid}$. To prove this, we first prove the statement ``if $tx_i$ is not valid but all of its sources are valid, then $\mathsf{V}(tx_i, {\cal P}(tx_i)) \ne \mathsf{valid}$.
%We prove this statement by contradiction. Assuming that there exists a transaction $tx_i$ that is not valid and $\mathsf{V}(tx_i, {\cal P}(tx_i)) = \mathsf{valid}$. It suggests that 
\end{proof}

\section{Smart Transacting Algorithms}\label{a:3}
Here we give a naive smart transacting algorithms ${\rm Source}_i=\mathsf{ST}(d_i,a_i,{\cal C}_u)$ in Algorithm~\ref{alg:sta} for rational nodes, where node $u$ intends to send an amount of $a_i$ to node $d_i$ in transaction $tx_i$ and ${\cal C}_u$ is a collection of all transactions and proofs recorded in node $u$. Here, we assume that nodes have sufficiently large computation capability so that the computation cost is insignificant comparing to the communication cost.

\begin{algorithm}[h]
\caption{Non-interactive Smart Transacting Algorithm ${\rm Source}_i=\mathsf{ST}(d_i,a_i,{\cal C}_u)$}
\label{alg:sta}
\begin{algorithmic}
\State \#Step 1: Check for all unspent transactions
\State $\mathsf{UT} \gets$ all unspent $tx_i$ that are in ${\cal C}$.
\State \#Step 2: Determine the chains that $d$ already has according to ${\cal C}$
\State $\mathsf{Collected} \gets \emptyset$
\For{each $tx_i$ in ${\cal C}$ {\bf and} $d_i=d$}
	\State $\mathsf{chains}_i \gets \{v| {\cal B}_v \in {\cal P}(tx_i)\cap {\cal C}_u \}$ \Comment{All chains in the proof of $tx_i$ according to ${\cal C}_u$}
	\State $\mathsf{Collected} \gets \mathsf{Collected} \cup \mathsf{chains}(i)$
\EndFor

\State \#Step 3: Find the sources which has the least amount of chains to send
\For{all $\mathsf{Source}_n \subset \mathsf{UT}$ such that the sum amount no less than $a_i$}
	\State $\mathsf{Proof}_n \gets$ union of all ${\cal P}(tx_i), tx_i \in \mathsf{Source}_n$
    \State $\mathsf{NChains}_n \gets \{v| {\cal B}_v \in \mathsf{Proof}_n\}$
	\State $\mathsf{ToCollect}_n \gets \mathsf{NChains}_n / \mathsf{Collected}$
\EndFor
\Return $\mathsf{Source}_l$ where $\mathsf{ToCollect}_l= \min (|\mathsf{ToCollect}_n|)$
\end{algorithmic}
\end{algorithm}

Note that Algorithm~\ref{alg:sta} is a non-interactive algorithm. The choice of the sources is much easier in an interactive fashion, in which the receiver simply tells the sender the chains that he already has once per round. Then, the second step in Algorithm~\ref{alg:sta} can be omitted. The cost of this communication is $O(gc)$, where $g$ is average number of chains that a node acquires and $c$ is the average number of transacting targets of each node. This cost is no larger than $O(gN)$ and adds no more than $O(g)$ cost to the CCPT for the same reason that we have for global optimization in Subsection~\ref{ss:ss}. Both interactive and non-interactive algorithms will result in spontaneous sharding. For a stable network with sufficient transactions been made by each node, either interactive or non-interactive schemes will have similar performance since the transaction pattern is fixed and each node should already have enough prior knowledge for the chains that each receiver has. }

\end{document}